\DeclareMathOperator{\Tr}{Tr}
\newtheorem{definition}{Definition}[]
\newtheorem{lemma}[definition]{Lemma}
\newtheorem{theorem}[definition]{Proposition}
\newcommand{\comm}[2]{\left[ #1 , #2 \right]}
\newcommand{\norm}[1]{\left\lVert #1 \right\rVert}
\newcommand{\change}[1]{{#1}}
\newcommand{\horiz}{
	\begin{tikzpicture}
		\draw (0,0)grid[step=2](4,2);
		\fill[fill = cyan, opacity=0.35] (0,0) rectangle ++(2,2);
		\fill[fill = yellow, opacity=0.5] (2,0) rectangle ++(2,2);
		\node at (1,1.5) {$\comm{X}{Z}$};
		\node at (1,.5) {$(\mu_x, \mu_y)$};
		\node at (3,1.5) {$Z$};
		\node at (3,.5) {$(\mu_x, \mu_y +1)$};
		
		\draw[yshift=0.5cm] (-2,2)grid[step=2](2,4);
		\fill[fill = yellow, opacity=0.5] (-2,2.5) rectangle ++(2,2);
		\fill[fill = cyan, opacity=0.35] (0,2.5) rectangle ++(2,2);
		\node at (-1,4) {$Z$};
		\node at (-1,3) {$(\mu_x, \mu_y - 1)$};
		\node at (1,4) {$\comm{X}{Z}$};
		\node at (1,3) {$(\mu_x, \mu_y)$};
		
		\draw (6,0) grid[step=2] ++(2,4);
		\fill[fill = cyan, opacity=0.35] (6,2) rectangle ++(2,2);
		\fill[fill = yellow, opacity=0.5] (6,0) rectangle ++(2,2);
		\node at (7,3.5) {$\comm{X}{Z}$};
		\node at (7,2.5) {$(\mu_x, \mu_y)$};
		\node at (7,1.5) {$Z$};
		\node at (7,.5) {$(\mu_x + 1, \mu_y)$};
		
		\draw[xshift=0.5cm] (8,2) grid[step=2] (10,6);
		\fill[fill = yellow, opacity=0.5] (8.5,4) rectangle ++(2,2);
		\fill[fill = cyan, opacity=0.35] (8.5,2) rectangle ++(2,2);
		\node at (9.5,5.5) {$Z$};
		\node at (9.5,4.5) {$(\mu_x - 1, \mu_y)$};
		\node at (9.5,3.5) {$\comm{X}{Z}$};
		\node at (9.5,2.5) {$(\mu_x, \mu_y)$};
	\end{tikzpicture}
}
\begin{document}
\title{Problem Specific Classical Optimization of Hamiltonian Simulation}

\author{Refik Mansuroglu \orcidlink{0000-0001-7352-513X}}
\email[]{Refik.Mansuroglu@fau.de}
\affiliation{Department of Physics, Friedrich-Alexander Universität Erlangen-Nürnberg (FAU), Staudtstraße 7, 91058 Erlangen}

\author{Felix Fischer \orcidlink{0009-0001-3040-8184}}
\affiliation{Department of Physics, Friedrich-Alexander Universität Erlangen-Nürnberg (FAU), Staudtstraße 7, 91058 Erlangen}

\author{Michael J. Hartmann \orcidlink{0000-0002-8207-3806}}
\affiliation{Department of Physics, Friedrich-Alexander Universität Erlangen-Nürnberg (FAU), Staudtstraße 7, 91058 Erlangen}

\date{\today}

\begin{abstract}
    Nonequilibrium time evolution of large quantum systems is a strong candidate for quantum advantage. Variational quantum algorithms have been put forward for this task, but their quantum optimization routines suffer from trainability and sampling problems. Here, we present a classical pre-processing routine for variational Hamiltonian simulation that circumvents the need of a quantum optimization by expanding rigorous error bounds in a perturbative regime for suitable time steps. The resulting cost function is efficiently computable on a classical computer. We show that there always exists potential for optimization with respect to a Trotter sequence of the same order and that the cost value has the same scaling as for Trotter in simulation time and system size. Unlike previous work on classical pre-processing, the method is applicable to any Hamiltonian system independent of locality and interaction lengths. Via numerical experiments for spin-lattice models, we find that our approach significantly improves digital quantum simulations capabilities with respect to Trotter sequences for the same resources. For short times, we find accuracy improvements of more than three orders of magnitude for our method as compared to Trotter sequences of the same gate number. Moreover, for a given gate number and accuracy target, we find that the pre-optimization we introduce enables simulation times that are consistently more than 10 times longer for a target accuracy of 0.1\%.
\end{abstract}

\maketitle

\section{Introduction}
\begin{figure*}[t]
	\centering
	\includegraphics[scale=0.7]{./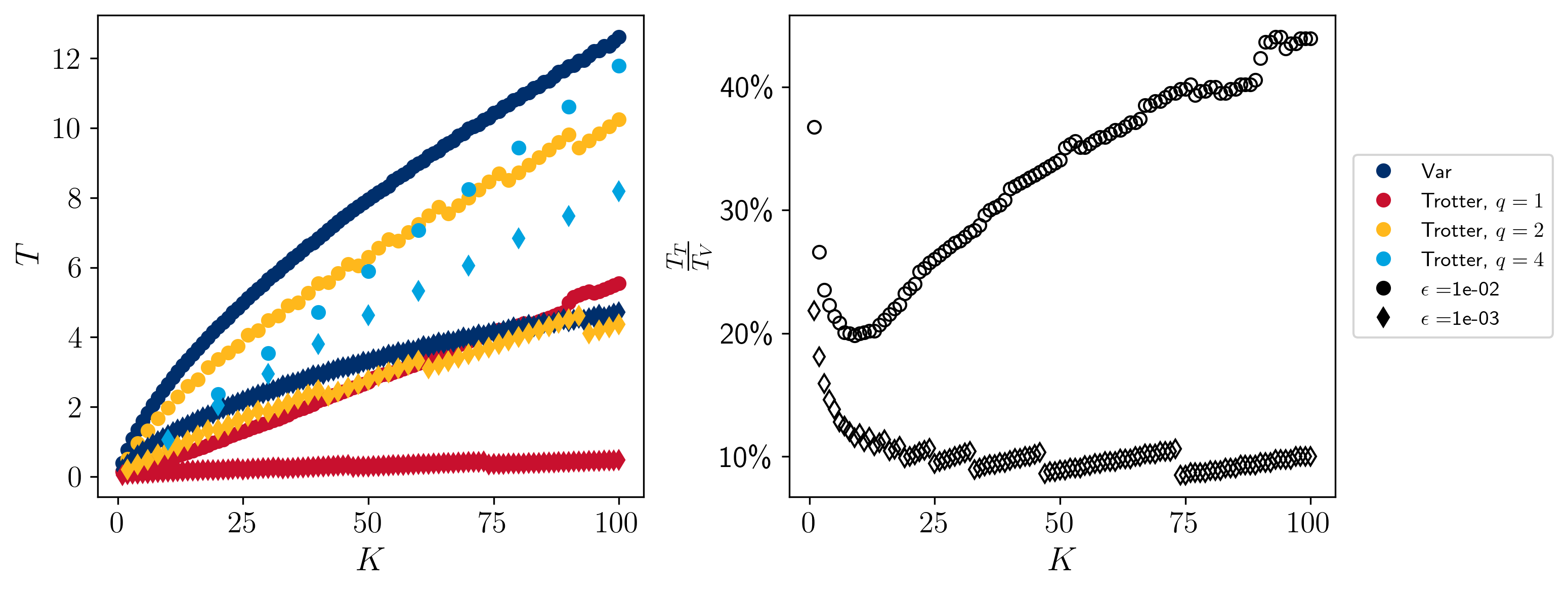}
	\caption{\textbf{Maximal simulation time $T$ of variational and Trotter sequences of order $q \in\{1,2,4\}$ (left) and simulation time ratio $T_T/T_V$ of first order sequences (right)} for a fixed gate count that is determined by $K$ and a threshold accuracy $\epsilon \in \{ 10^{-2}, 10^{-3} \}$ evaluated on a nearest neighbor XY model on a $3\times3$ quadratic lattice and random interaction strength $J^{(y)}_{\mu,\nu}$ centered around 0.5, $J^{(z)}_{\mu\nu}$ centered around 1 and $h_\mu=0.25$. The small jumps in $\frac{T_T}{T_V}$ are due to the finite step size in the search for optimal time steps $t$ to meet the error threshold $\epsilon$. See section \ref{sec:num} for details of the numerical experiments to produce the data shown here.}
	\label{fig:max_time}
\end{figure*}

Quantum simulation is believed to be one of the first applications in quantum computing that may show practical advantages over their classical counterparts \cite{Smith_2019, Arute2019}. To compile a digital quantum algorithm for simulating a time evolution $U(t)$ that will be typically generated by a $p$-local Hamiltonian,
\begin{align}
	U(t) = e^{-itH} \qquad \text{with} \qquad H = \sum_j c_j H_j,
\end{align}
a Trotter-Suzuki decomposition \cite{Suzuki_85}, that approximates $U(t)$ as a product of local exponentials, can be employed. Although so called ``beyond Trotter" methods, such as more general product formulas \cite{YOSHIDA1990262,BARTHEL2020168165,Ostmeyer2022}, quantum signal processing techniques \cite{Low17} \change{or Krylov subspace inspired algorithms \cite{Bharti_2021}}, have been investigated, recent studies on Trotter error scaling \cite{Childs2021, schubert2023trotter} have proven Trotter sequences to still be competitive.

Nonetheless, in specific applications, Trotter sequences fail to deliver the optimal solution for a product formula with fixed gate count \cite{jones2019optimising}. Amongst other optimization strategies, variational quantum algorithms have been put forward to find optimized (system-specific) gate sequences for Hamiltonian simulation \cite{Barison2021efficientquantum, Benedetti2021, C_rstoiu_2020, gibbs2021longtime, Tepaske2023, zhao2022}.

Although variational quantum algorithms have recently piqued significant interest as possible applications on near term quantum hardware \cite{Cerezo2021}, they hide intrinsic difficulties that make their implementation on near term devices very challenging. Vanishing gradients (Barren plateaus) \cite{McClean_2018, Cerezo_2021, Wang_2021}, sample overhead for efficient measurements \cite{Huang_2020, Zhenyu2020} and effects of noise on cost function evaluations \cite{Stilck_Fran_a_2021} are only some of the hurdles to overcome. Thus, key to a successful near term quantum algorithm is developing strategies to use quantum hardware in the crucial solution step, but otherwise as little as possible.

A drastic, however promising, approach to solve this is not to do quantum optimization at all, but rather push the whole optimization loop into classical pre-processing, which requires a cost function that is efficiently computable by a classical machine. This idea has been demonstrated to be applicable in specific cases, such as translational invariant models \cite{Mansuroglu_2023} in which parameters found on small systems can be reused in large systems, as well as systems with efficient tensor network representations \cite{Streif_2020, Keever2022, Cervero2023, kotil2022riemannian} to calculate error measures classically.

Here, we present a cost function that is classically efficiently computable and use it for classical optimization of quantum circuits that implement time evolution in pre-processing. We show that this cost function yields a faithful upper bound on the two-norm error of the ansatz unitary, if one restricts the single time step to be small, an assumption that is common to all product formulae \cite{YOSHIDA1990262,BARTHEL2020168165,Ostmeyer2022,Childs2021}. Our cost function is agnostic of the considered initial state and relies on a perturbative expansion of a rigorous upper bound that can be transformed into a polynomial in the parameters. The number of terms that need to be calculated scales with the number of non-commuting terms $H_j$ in the Hamiltonian. While this number is generally dependent on the locality and interaction distance in $H$, it is bounded by a power of the qubit number for models with only two-body interactions. There are thus no intrinsic limitations on the Hamiltonian for the use of our method. We further give a perturbative and classically computable error estimate that allows us to predict error scalings in system size and simulation time if the optimized gate sequence is repeated $K$ times. 

We illustrate the improvements enabled by our approach via exact numerical simulations for an XY model with random nearest neighboring interactions. Our results (cf. Fig.~\ref{fig:max_time}) show that the maximally reachable simulation times $T$ for a given target accuracy $\epsilon$ and a fixed gate budget $G=\mathcal{O}(K)$ are more than 10 times longer than for a Trotter sequence of the same order for cases with $K \gtrsim 20$ and $\epsilon=10^{-3}$. Moreover, the variational sequence even reaches slightly larger simulation times than second order Trotter and is competitive with fourth order Trotter, whereas these higher order formulas provide less data points for fixed gate budget.

The paper is structured as follows. We begin by introducing the second order perturbative distance of a variational sequence from the correct time evolution operator in section \ref{sec:pert} and show that it bounds the algorithm error up to third order corrections in the parameters. We further show that the perturbative distance always has a non-vanishing gradient at the point in the parameter space that corresponds to the Trotter sequence yielding a general guarantee for the existence of an optimal solution away from Trotter. In section \ref{sec:scaling}, we compute the scaling of the error in the extrapolation of found optimal gate sequence in system size and simulation time. 
Finally, we demonstrate our findings and the advantages they enable in numerical experiments on the XY model in section \ref{sec:num} before we close with a summary and outlook.

\section{Perturbative Cost Function}
\label{sec:pert}
We are interested in a variational ansatz for quantum simulation $U(t) = \exp{\left( -i t H \right)}$. It has been demonstrated before \cite{jones2019optimising, Mansuroglu_2023} that Trotter sequences can be optimized in a model specific way using the ansatz
\begin{align}
	U_{var} = \prod_{r=1}^R \prod_{j=1}^M e^{-i \theta_{r,j} H_j}.
	\label{eq:ansatz}
\end{align}
The inner product factorizes all $M$ Hamiltonian terms indexed with $j$ and the outer product adds the freedom to use in total $R$ layers in which the parameters can be different. \change{While the choice of ansatz is not necessarily restricted to Eq.~\eqref{eq:ansatz}, it has the advantage of containing the Trotter solution.} If we use the parameters $\theta_{r,j} = \frac{c_j t}{R}$, Eq.~\eqref{eq:ansatz} reduces to a Trotter decomposition with Trotter number $R$.

Finding the optimal solution for an operator approximation is hard, since error measures for the time evolution operator, such as
\begin{align}
	\epsilon_{var} = \norm{U(t) - U_{var}},
	\label{eq:error}
\end{align}
where $\norm{.}$ is a yet unspecified, unitary invariant norm, require exponential resources, in general. In this work, we present a classically efficient optimization strategy that yields good results in the relevant regime of small time steps $\frac{t}{R}$, where power expansions can be applied. To this end, we introduce an approximation to an upper bound of the error measure $\epsilon_{var}$ for the time evolution operator in equation (\ref{eq:error}), which is provided by a quantity that we call perturbative distance.
\begin{definition}
	Let $H = \sum_{j}^M c_j H_j$ be a time-independent Hamiltonian. Define the perturbative distance
	\begin{align}
		C(\theta) = \norm{
			i \sum_{j} \xi_j(\theta) H_j - \sum_{j>j'} \chi_{j,j'}(\theta) \comm{H_j}{H_{j'}} 
		}
		\label{eq:pert_dist}
	\end{align}
	with the coefficients
	\begin{align}
		\xi_j(\theta) &= \sum_r \theta_{r,j} - t c_j \\
		\chi_{j,j'}(\theta) &= \frac{1}{2} \left[ \sum_r \left( \theta_{r,j} \theta_{r,j'} \right) + \sum_{r>r'} \left( \theta_{r,j} \theta_{r',j'} - \theta_{r,j'} \theta_{r',j} \right) \right].
		\label{eq:chi_coeff}
	\end{align}
\end{definition}
The perturbative distance $C(\theta)$ represents an approximation to an upper bound of the error measure $\epsilon_{var}$ for the time evolution operator. We make this statement precise in the following proposition.


\begin{theorem}
	Let $H = \sum_{j}^M c_j H_j$ be a time-independent Hamiltonian and $C(\theta)$ the perturbative distance defined in Eq.~\eqref{eq:pert_dist}. Then the error $\epsilon_{var}$ defined in Eq.~\eqref{eq:error} is bounded up to third order corrections by
	\begin{align}
		\epsilon_{var} \leq C(\theta) + \mathcal{O}\left(\theta^3\right)
	\end{align}
	\begin{proof}
		We can write the difference operator in Eq.~\eqref{eq:error} in the form
		\begin{align}
			e^{-itH} - e^Z,
		\end{align}
		where $Z$ is determined by merging $U_{var}$ into one exponential via the Baker-Campbell-Hausdorff (BCH) lemma. We show in Lemma \ref{lem:BCH_trunc} (see Appendix \ref{app:BCH_trunc} for a proof) that up to third order
		\begin{align}
			Z = &i \sum_{r,j} \theta_{r,j} H_j - \frac{1}{2} \sum_{(r,j)>(r',j')} \theta_{r,j} \theta_{r',j'} \comm{H_j}{H_{j'}} \nonumber \\
			&+ \mathcal{O}(\theta^3).
		\end{align}
		We assumed a specific ordering of the multi-index $(r,j)$ for the sake of being precise. The statement can be very well applied to arbitrary orderings of the exponentials in Eq.~\eqref{eq:ansatz}. We finally make use of a well-known estimate (see Lemma \ref{lem:bound} in Appendix \ref{app:bound} for a proof)
		\begin{align}
			\norm{e^{iA} - e^{iB}} \leq \norm{A - B},
		\end{align}
		that leaves us with the proposition.
	\end{proof}
\end{theorem}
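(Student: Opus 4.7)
The plan is to express the difference $U(t) - U_{var}$ as a difference of two single exponentials $e^{-itH} - e^{Z}$, where $Z$ is the generator obtained by fusing the product ansatz in Eq.~\eqref{eq:ansatz} into one exponential via the Baker--Campbell--Hausdorff (BCH) formula. Once both operators appear in the common form $e^{i(\cdot)}$, the operator-norm comparison reduces to a comparison of their generators, which is the natural place where the perturbative coefficients $\xi_j$ and $\chi_{j,j'}$ will materialize.

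First I would carry out the BCH expansion of $Z = \log(U_{var})$ up to and including second order in the parameters $\theta_{r,j}$. The first-order contribution is simply the linear sum $i\sum_{r,j}\theta_{r,j}H_j$, and the second-order contribution is $-\tfrac{1}{2}$ times a sum of nested commutators of pairs of $i\theta_{r,j}H_j$ with $i\theta_{r',j'}H_{j'}$, ordered by the multi-index $(r,j)$ induced by the ansatz. I would then reorganize that double sum into the canonical basis of commutators $\comm{H_j}{H_{j'}}$ with $j>j'$: pairs within the same layer $r$ contribute $\theta_{r,j}\theta_{r,j'}$, while pairs with $r>r'$ yield the antisymmetric combination $\theta_{r,j}\theta_{r',j'} - \theta_{r,j'}\theta_{r',j}$, which is precisely the structure of $\chi_{j,j'}(\theta)$ in Eq.~\eqref{eq:chi_coeff}. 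Subtracting the target generator $-itH = -it\sum_j c_j H_j$ from $Z$ then produces the coefficients $\xi_j(\theta) = \sum_r \theta_{r,j} - tc_j$ in front of $iH_j$, completing the match to Eq.~\eqref{eq:pert_dist} modulo $\mathcal{O}(\theta^3)$ BCH remainders.

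The final step is to promote this operator-level identity to a norm bound by invoking the standard unitary-invariant estimate $\norm{e^{iA} - e^{iB}} \leq \norm{A - B}$ (the content of Lemma~\ref{lem:bound}), setting $iA = -itH$ and $iB = Z$. The truncation remainder in the BCH expansion, being of order $\theta^3$ in operator norm, is absorbed into the additive $\mathcal{O}(\theta^3)$ term by one application of the triangle inequality inside the norm.

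I expect the main obstacle to be the combinatorial bookkeeping at second order: the nested-commutator sum is indexed by the ordered multi-index $(r,j)$ of the ansatz, and one must verify that after antisymmetrizing and collecting like commutators the resulting coefficients are independent of the ordering convention adopted inside Eq.~\eqref{eq:ansatz} and reproduce $\chi_{j,j'}$ exactly. This is the technical content that deserves to be isolated as a separate lemma (Lemma~\ref{lem:BCH_trunc}); the remaining argument is then a short application of the BCH truncation bound and the norm inequality.
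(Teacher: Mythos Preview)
Your proposal is correct and follows essentially the same route as the paper: merge $U_{var}$ into a single exponential $e^{Z}$ via BCH (Lemma~\ref{lem:BCH_trunc}), collect the second-order commutator coefficients into $\chi_{j,j'}$ exactly as you describe, and then apply the norm inequality $\norm{e^{iA}-e^{iB}}\le\norm{A-B}$ of Lemma~\ref{lem:bound} with the triangle inequality absorbing the $\mathcal{O}(\theta^3)$ remainder. Your observation that the combinatorial reorganization of the ordered double sum into the $\chi_{j,j'}$ form deserves to be isolated as a lemma matches precisely what the paper does in Appendix~\ref{app:BCH_trunc}.
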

To make $C(\theta)$ an approximate upper bound to the error $\epsilon_{var}$, we need to assume $\frac{t}{R} \norm{H} \ll 1$ and bound the parameters $\theta$ to be at the same order of magnitude. This assumption yields a time-dependent bound on $R$, which is not surprising. In general, Lieb-Robinson bounds on the entanglement that is generated by time evolution gives a relation between $t$ and $R$ \cite{Haah_2021}. If one manages to minimize $C(\theta)$ within the perturbative regime, the leading term in the upper bound will be of order $\mathcal{O}(\theta^3)$, yielding a new error estimate that we will analyze in section \ref{sec:scaling}.

Note that first order Trotter sequences can be represented by Eq.~\eqref{eq:ansatz} setting the parameters $\theta_{r,j} = \frac{t c_j}{R}$. The Trotter solution has the advantage that it makes the linear contribution in $\theta$, respectively $\frac{t}{R}$, vanish. It is however not the only solution that achieves this. In fact, we can reduce the number of parameters by $M$, if we constrain the linear terms by $\xi_j = 0 \, \forall j$. Let us thus fix the last layer ($r=R$) by setting
\begin{align}
	\theta_{R, j} = tc_j - \sum_{r \neq R} \theta_{r,j},
	\label{eq:cond}
\end{align}
so that $(R-1)M$ parameters remain. \change{Note that Eq.~\eqref{eq:cond} is a choice that turns out to be helpful as long as the operators that contribute in linear order are all orthogonal to the operators in the second order terms, which we assume in the following. Otherwise first and second order terms could be used to cancel each other.}

We are still free to choose a specific norm. While in finite dimensions, all norms are equivalent, choosing a norm to describe the error $\epsilon$, implies a notion of how the algorithm performs for specific initial states. If we were to choose the operator norm $\norm{.}_\infty$, for instance, $\epsilon$ can be connected to an infidelity for the worst case initial state, while a 2-norm $\norm{.}_2$ yields an average case error. Later, $C(\theta)$ shall be used as a cost function in variational optimization, so it should be continuously differentiable in the parameters $\theta_{r,j}$. While this rules out the operator norm, all Schatten $p$-norms can be applied. If we consider the 2-norm, we see that the computation of $C(\theta)$ is classically efficient. In this case
\begin{align}
	C(\theta)^2 = - 2^{-n} \sum_{j>j'} \sum_{k>k'} \chi_{j,j'} \chi_{k,k'} \Tr\left( \comm{H_j}{H_{j'}} \comm{H_k}{H_{k'}} \right),
	\label{eq:cost}
\end{align}
where we included a normalization factor $2^{-n}$ that prevents $C(\theta)$ from growing in the Hilbert space dimension. Typically, the commutators $\comm{H_j}{H_{j'}}$ follow a simple algebraic structure (for instance if all $H_j$ are Pauli strings). The factors $\Tr\left( \comm{H_j}{H_{j'}} \comm{H_k}{H_{k'}} \right)$ can thus efficiently be computed for the model of interest, typically even by hand, leaving $C(\theta)$ to be a fourth order polynomial in $\theta$.

The Trotter solution, while becoming exact in the limit $R\to\infty$, can always be beaten for finite $R \geq 3$, as we will show in the following.
\begin{theorem}
	\label{theo:opt_pot}
	Let $H= \sum_j c_j H_j$ a Hamiltonian and $U_{var}$ as defined in Eq.~\eqref{eq:ansatz} with $R\geq 3$. The Trotter solution $\theta^T_{r,j} = \frac{t c_j}{R}$ describes a (local) minimum of $C(\theta)$ if and only if the Trotter formula is exact, i.e. $C(\theta^T)=0 \, \forall t$.
	\begin{proof}[Proof (Sketch)]
		For the forward direction "$\Rightarrow$", we explicitly calculate the gradient at the Trotter solution $\nabla C\left(\theta^T\right)$. Setting this to zero, gives a relation
		\begin{align}
			\sum_{j>l} \sum_{k>k'} \mathcal{A}_{j,k,k',l} = \frac{2r - R}{2(r-1) - R}  \sum_{j<l} \sum_{k>k'} \mathcal{A}_{j,k,k',l} \quad \forall r,l
			\label{eq:C_grad}
		\end{align}
		where $\mathcal{A}_{j,k,k',l}$ is a tensor independent of $r$ and $R$. For $R \geq 3$, we show that this forces the left hand side of Eq.~\eqref{eq:C_grad} to vanish making the Trotter sequence exact. For details, see Appendix \ref{app:proof_pot}. The reverse direction "$\Leftarrow$" is trivial.
	\end{proof}
\end{theorem}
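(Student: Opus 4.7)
The plan is to establish the non-trivial direction ($\Rightarrow$) by computing $\nabla C^2$ at the Trotter point and showing that for $R \geq 3$ the gradient can vanish only if the Trotter formula is already exact. The reverse direction is immediate: since $C \geq 0$, any zero of $C$ is automatically a global, hence local, minimum.

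First, I would use the constraint Eq.~\eqref{eq:cond} to eliminate $\theta_{R,j}$, leaving $(R-1)M$ independent parameters $\theta_{r,j}$ with $r<R$. Since $\xi_j$ then vanishes identically, $C^2$ depends on the parameters only through $\chi_{j,j'}$ as in Eq.~\eqref{eq:cost}. Differentiating gives
\begin{align}
\frac{\partial C^2}{\partial \theta_{r,l}} = -\frac{2}{2^n}\sum_{j>j'}\sum_{k>k'}\chi_{k,k'}\,\Tr\bigl(\comm{H_j}{H_{j'}}\comm{H_k}{H_{k'}}\bigr)\,\frac{\partial \chi_{j,j'}}{\partial \theta_{r,l}},
\end{align}
and I would evaluate both $\chi_{k,k'}$ and $\partial \chi_{j,j'}/\partial \theta_{r,l}$ at $\theta^T_{r,j}=tc_j/R$. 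There $\chi^T_{j,j'}=t^2c_jc_{j'}/(2R)$ since the antisymmetric $\sum_{r>r'}$ piece of $\chi$ cancels on the symmetric Trotter point, and the derivative splits into a piece symmetric in $(j,j')$ (coming from $\tfrac12\sum_s\theta_{s,j}\theta_{s,j'}$ together with the implicit dependence through $\theta_{R,j}$) and an antisymmetric piece (from the $\sum_{s>s'}$ contributions).

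Since $\Tr(\comm{H_j}{H_{j'}}\comm{H_k}{H_{k'}})$ is antisymmetric in $(j,j')$ and $\chi^T_{k,k'}$ is symmetric in $(k,k')$, only the antisymmetric part of $\partial\chi_{j,j'}/\partial\theta_{r,l}$ survives the contraction. Tracking the index bookkeeping carefully should put the gradient into the schematic form
\begin{align}
\left.\frac{\partial C^2}{\partial \theta_{r,l}}\right|_{\theta^T}\propto (2r-R)\sum_{j>l,\,k>k'}\mathcal{A}_{j,k,k',l}\;-\;(2(r-1)-R)\sum_{j<l,\,k>k'}\mathcal{A}_{j,k,k',l},
\end{align}
with $\mathcal{A}_{j,k,k',l}$ an $r$-independent tensor built from the $c$'s and the commutator traces. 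Setting this to zero and rearranging produces Eq.~\eqref{eq:C_grad}.

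The key, and what I expect to be the most delicate step, is the rigidity argument that closes the proof. Eq.~\eqref{eq:C_grad} must hold simultaneously for every $r\in\{1,\dots,R-1\}$, but the ratio $(2r-R)/(2(r-1)-R)$ takes distinct values for distinct $r$ precisely when $R\geq 3$ (one can verify directly that $R=1,2$ give at most one independent equation, which is why those cases are excluded). A single linear relation between two $r$-independent sums cannot be consistent with two different proportionality constants unless both sums vanish separately. Finally, resumming the resulting vanishing relations over $l$ reconstitutes exactly $C(\theta^T)^2$ up to a positive prefactor, forcing $C(\theta^T)=0$ and hence Trotter exactness. The main obstacles are the careful derivative bookkeeping, especially the implicit dependence through $\theta_{R,j}$, and verifying that $\mathcal{A}_{j,k,k',l}$ really is $r$-independent so that the rigidity argument bites.
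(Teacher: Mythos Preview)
Your proposal is correct and follows essentially the same route as the paper: compute $\partial_{r,l}C^2$ at the Trotter point, obtain the relation with the $r$-dependent prefactor $(2r-R)/(2(r-1)-R)$, and use that two distinct values of this ratio force both sums to vanish, hence $C(\theta^T)=0$. The one difference is that the paper does \emph{not} impose the constraint Eq.~\eqref{eq:cond} in the proof; it treats all $RM$ parameters as independent, so the derivative of $\chi_{j,j'}$ is computed directly and the $\xi$-contribution to $\nabla C^2$ vanishes automatically at $\theta^T$ because $\xi_j(\theta^T)=0$. This lets the paper use $q=1$ and $q=R$ for the rigidity step, whereas you restrict to $r\le R-1$; both choices yield two distinct ratios once $R\ge 3$, so the conclusion is the same. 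Your use of the constraint is harmless but introduces the implicit-dependence bookkeeping you flag as an obstacle, which the paper simply sidesteps.
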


\section{Error Scaling}
\label{sec:scaling}
Since the proposed classical optimization scheme can only make valid statements within the perturbative regime $\theta \ll \frac{1}{\norm{H}}$, longer simulation times either require large $R$ or repetitions of a small time step. A similar extrapolation can be done in system size. This section studies the scaling of the perturbative distance along these extrapolations. In the case that the second order perturbative cost function is optimzed to values that become negligible to higher orders, the error scaling is determined by a third order term $E(\theta)$ for which
\begin{align}
	\epsilon_{var} \leq C(\theta) + E(\theta) + \mathcal{O} \left( \theta^4 \right)
	\label{eq:error_term}
\end{align}
holds. First, we consider the extrapolation in system size in the presence of translation symmetries.
\begin{theorem}
	\label{theo:trans_inv}
	Let $H = \sum_{a=1}^A \sum_{j=1}^n c_a H^{(j)}_{a}$ be a $p$-local, translation invariant Hamiltonian, where the index $a$ labels different interaction terms and the dependence on the qubit number, indicated by the superscript $(j)$, only denotes on which qubit $H_a^{(j)}$ has support. Let further $U_{var}$ be as defined in Eq.~\eqref{eq:ansatz}. 
	
	If the parameters are chosen to be translation invariant, i.e. $\theta_{r,j} = \theta_{r,a}$, then the cost value for a system of $n$ qubits on a $D$-dimensional geometry is bounded by the cost value $C_{\rm unit}(\theta)$ of a system of $(2p-1)^D$ qubits and the error is bounded by $E_{\rm unit}(\theta)$ of a system of $(3p-2)^D$ qubits
	\begin{align}
		C(\theta) \leq n C_{\rm unit}(\theta) \qquad \text{and} \qquad E(\theta) \leq n E_{\rm unit}(\theta).
	\end{align}
	Further, if all the commutators $\comm{H_j}{H_{j'}}$ are orthogonal with respect to the Hilbert-Schmidt product, cost and error read
	\begin{align}
		C(\theta) = \sqrt{n} C_{\rm unit}(\theta) \quad \text{and} \quad E(\theta) = \sqrt{n} E_{\rm unit}(\theta).
	\end{align}
\end{theorem}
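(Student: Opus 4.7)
The plan is to use translation invariance to decompose the operator inside the norm of $C(\theta)$ as a sum of $n$ spatially shifted copies of a single ``unit'' operator $T_0$, and then to apply the triangle inequality (for the $n$ bound) or Hilbert--Schmidt orthogonality (for the $\sqrt{n}$ equality).

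Under the ansatz $\theta_{r,(a,\mu)} = \theta_{r,a}$, the coefficients $\xi$ and $\chi$ depend only on the interaction labels $a,a'$, not on the spatial positions $\mu,\mu'$. I would write the argument of the norm in $C(\theta)$ as $T = \sum_\mu T_\mu$, where $T_\mu$ collects all contributions whose commutator support is anchored at site $\mu$ (say, at its minimum-coordinate corner). By translation symmetry every $T_\mu$ is the rigid shift of $T_0$, so $\norm{T_\mu} = \norm{T_0}$ for any unitarily invariant norm. Since each $H_a$ is $p$-local, the commutator $[H_a^{(\mu)}, H_{a'}^{(\mu')}]$ vanishes unless $|\mu-\mu'| \leq p-1$, and its support then fits inside a hypercube of side $2p-1$. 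Consequently $T_0$ is supported on at most $(2p-1)^D$ sites, and its norm is precisely $C_{\rm unit}(\theta)$ on that unit cell. The triangle inequality then gives $C(\theta) \leq \sum_\mu \norm{T_\mu} = n\, C_{\rm unit}(\theta)$.

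For the equality case, the assumed pairwise Hilbert--Schmidt orthogonality of distinct commutators transfers to mutual HS-orthogonality of the $T_\mu$. Expanding $\norm{T}_2^2 = 2^{-n}\Tr(T^\dagger T) = 2^{-n}\sum_{\mu,\mu'} \Tr(T_\mu^\dagger T_{\mu'})$, the cross terms vanish and each diagonal contribution reduces to $C_{\rm unit}(\theta)^2$ after converting $2^{-n}\Tr_n$ into $2^{-(2p-1)^D}\Tr_{(2p-1)^D}$ via the trivial action on the complementary sites. This yields $C(\theta) = \sqrt{n}\, C_{\rm unit}(\theta)$. The identical grouping argument applies to $E(\theta)$; the only change is that its third-order BCH contribution is built from nested commutators $[H_j,[H_{j'},H_{j''}]]$, which are nonzero only when the three $p$-local supports form a chain of pairwise overlaps spanning at most $3p-2$ sites per direction, hence the enlarged $(3p-2)^D$ unit cell.

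The main obstacle will be the bookkeeping around index symmetry and trace normalization: ensuring that the decomposition $T = \sum_\mu T_\mu$ is consistent with the antisymmetric part of $\chi_{j,j'}$ and with the ordering convention $j>j'$ in its double sum (so that no contribution is double counted), and verifying that the unit-cell normalization $2^{-(2p-1)^D}$ is exactly what emerges from $2^{-n}$ after the translation folding. For $E(\theta)$, one must additionally pin down the precise third-order form by extending the BCH expansion of Lemma~\ref{lem:BCH_trunc} one order further before the locality argument can be applied verbatim.
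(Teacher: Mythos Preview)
Your proposal is correct and follows essentially the same route as the paper: decompose the operator inside $C(\theta)$ as a sum over lattice sites of translation-shifted copies of a single unit-cell operator (using that $\chi$ depends only on interaction labels), invoke $p$-locality to confine each copy to a $(2p-1)^D$ block, then apply the triangle inequality for the $n$ bound and Hilbert--Schmidt orthogonality for the $\sqrt{n}$ equality, with the same argument at one BCH order higher for $E(\theta)$ on a $(3p-2)^D$ block. The bookkeeping concerns you flag (ordering convention in $\chi_{j,j'}$, trace normalization) are handled in the paper by an explicit split $\sum_{(a,j)>(a',j')} = \sum_{j>j',\,a=a'} + \sum_{a>a'}\sum_{j,j'}$ but are otherwise not belabored there either.
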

See Appendix \ref{app:trans_inv} for a proof. Remarkably, the calculation of $C(\theta)$ on translation invariant systems is reduced to a problem on $(2p-1)^D$ qubits only. If the optimal parameters are used for a larger system, the cost function will scale as $\mathcal{O}(n)$. In fact, a system that is small enough can be exactly diagonalized on a classical computer and an exact measure of error, such as $\norm{U(0,t) - U_{var}}$ can be calculated. A successful extrapolation of optimal parameters for translation invariant systems to larger system sizes has been found in \cite{Mansuroglu_2023} and also shows a linear scaling of the squared cost function in system size $n$, even beyond the perturbative regime. \change{While we restrict the discussion here to translation symmetries, the behavior of other scaling transformation such as clusterings or pooling inspired transformations can be analogously checked.}

Optimizing on longer simulation times still within the perturbative regime comes with an increase in $R$ and therefore with the number of parameters making a successful optimization hard. Instead, long simulation times can be reached by repeating a pre-trained single time step fixing the number of parameters to be $RM$. The scaling of the cost function under this extrapolation in simulation time is studied in the following.
\begin{theorem}
	\label{theo:long_time}
	Let $U_{var}(\theta)$ be an ansatz for variational Hamiltonian simulation as defined in Eq.~\eqref{eq:ansatz} with the cost value $C(\theta)$. The ansatz
	\begin{align}
		U_{var}^K\left( \theta^K \right) = \left( \prod_{r=1}^R \prod_{j=1}^M e^{-i \theta_{r,j} H_j} \right)^K.
		\label{eq:Uvar_long}
	\end{align}
	simulates time $T = Kt$ using $KR$ layers and $K$ copies of the $RM$ parameters. The corresponding cost value reads 
	\begin{align}
		C\left( \theta^K \right) = K C(\theta).
	\end{align}
	\begin{proof}[Proof idea]
		The extrapolated ansatz is equivalent to the ansatz before by mapping $R \to KR$, $t \to Kt$ and identifying $\theta_{r,j} = \theta_{r \% R,j}$, where $\%$ denotes the modulo operation. Plugging this into the definition of $\chi$ from Eq.~\eqref{eq:chi_coeff} yields a linear factor $K$. See Appendix \ref{app:long_times} for details.
	\end{proof}
\end{theorem}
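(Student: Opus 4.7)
The plan is to interpret the $K$-repeated ansatz as a single instance of the general ansatz Eq.~\eqref{eq:ansatz} with $R$ replaced by $KR$, total time replaced by $T=Kt$, and parameters taken to be $K$ periodic copies of $\theta_{r,j}$. Concretely, I set $\tilde\theta_{(k-1)R+s,\,j} = \theta_{s,j}$ for $k\in\{1,\dots,K\}$ and $s\in\{1,\dots,R\}$, and then apply the definitions of $\xi_j$ and $\chi_{j,j'}$ from Eq.~\eqref{eq:chi_coeff} to the $\tilde\theta$'s. The goal is to show that under the constraint Eq.~\eqref{eq:cond}, both coefficients scale linearly in $K$, so that $C(\theta^K)=KC(\theta)$.

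For $\xi_j$ this is immediate: summing $\tilde\theta_{r,j}$ over all $KR$ layers yields $K\sum_{s=1}^R\theta_{s,j}$, and subtracting $(Kt)c_j$ gives $K\xi_j(\theta)=0$ by Eq.~\eqref{eq:cond}. For $\chi_{j,j'}$ I would split the double sum over layer indices into intra-copy contributions (both indices inside the same block of $R$ layers) and inter-copy contributions (indices in distinct blocks). The diagonal piece $\sum_r \tilde\theta_{r,j}\tilde\theta_{r,j'}$ and the intra-copy part of the ordered sum each pick up a factor $K$ directly, reproducing $K$ times the corresponding pieces of $\chi_{j,j'}(\theta)$.

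The key step, and the one I expect to be slightly subtle, is the inter-copy contribution
\begin{align}
\sum_{k>k'}\sum_{s,s'}\bigl(\theta_{s,j}\theta_{s',j'}-\theta_{s,j'}\theta_{s',j}\bigr).
\end{align}
Here the inner double sum factorizes as $\bigl(\sum_s\theta_{s,j}\bigr)\bigl(\sum_{s'}\theta_{s',j'}\bigr) - \bigl(\sum_s\theta_{s,j'}\bigr)\bigl(\sum_{s'}\theta_{s',j}\bigr)$. By the constraint Eq.~\eqref{eq:cond}, each $\sum_s\theta_{s,j}=tc_j$ is just a scalar, so the two products are equal and the inter-copy contribution vanishes identically. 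This is the essential reason why the cost adds linearly across repetitions and where the fact that we enforce $\xi_j=0$ is actually used.

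Combining these pieces gives $\chi_{j,j'}(\theta^K)=K\chi_{j,j'}(\theta)$ and $\xi_j(\theta^K)=0$, so the operator inside the norm in Eq.~\eqref{eq:pert_dist} is scaled by $K$, which pulls out of the norm to yield $C(\theta^K)=KC(\theta)$. The only real obstacle is bookkeeping for the inter-copy cancellation; everything else is routine substitution, and the appendix reference suggests the authors likewise just do the bookkeeping there.
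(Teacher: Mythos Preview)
Your argument is correct and mirrors the paper's Appendix~\ref{app:long_times}: both split the $KR$-layer double sum into intra- and inter-copy pieces and show the inter-copy contribution vanishes, leaving $\chi_{j,j'}(\theta^K)=K\chi_{j,j'}(\theta)$. One small correction: the constraint~\eqref{eq:cond} is \emph{not} needed for the inter-copy cancellation --- your factorized expression $\bigl(\sum_s\theta_{s,j}\bigr)\bigl(\sum_{s'}\theta_{s',j'}\bigr)-\bigl(\sum_s\theta_{s,j'}\bigr)\bigl(\sum_{s'}\theta_{s',j}\bigr)$ is zero for \emph{any} real parameters simply because scalar multiplication commutes (equivalently, the summand is antisymmetric under $s\leftrightarrow s'$, which is precisely how the paper phrases the cancellation). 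Thus $C(\theta^K)=KC(\theta)$ holds without imposing $\xi_j=0$, and your remark that this step is ``where the fact that we enforce $\xi_j=0$ is actually used'' is a harmless but mistaken attribution.
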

In total, the cost function scales as $\mathcal{O}(K)$ when extrapolating in time. Note that the leading order Trotter error scales the same under the above mapping
\begin{align}
	\norm{U(0,t) - U_{Trotter}} = \mathcal{O}\left( \frac{t^2}{R} \right) \to \mathcal{O}\left( K \frac{t^2}{R} \right).
\end{align}
The error $E(\theta)$ of the repeated sequence does not admit a common $K$-dependent factor that can be pulled out, but rather introduces extra terms, $\Lambda$, that make the error scaling worse than for Trotter sequences (see appendix \ref{app:long_times} for details). The extra terms will, in general, contribute to an $\mathcal{O}(K^2)$ scaling,
\begin{align}
	E(\theta^K) \leq K E(\theta) + \frac{K(K-1)}{2} \Lambda.
\end{align}
Although this result means that usually the improvement with respect to Trotter will eventually melt away at long times, our numerical experiments show that there are significant improvements for the time scales that are relevant in order to reach quantum advantage in digital quantum simulation. 

\section{Numerical Experiment}
\label{sec:num}
To explore the accuracy improvement that can be enabled with our approach in specific examples, we study the performance of the derived cost function by optimizing a variational sequence with the second order perturbative distance defined in Eq. (\ref{eq:pert_dist}) and comparing it to the exact value for the difference of time evolutions $\epsilon_{var} = \norm{U(t) - U_{var}}$ and $ \epsilon_{Trotter} = \norm{U(t) - U_{Trotter}}$ as well as to the leading order error term $E(\theta)$ (cf. Eq.~\eqref{eq:error_term}) that is efficiently computable classically. For this comparison, we study the ratios 
\begin{align}
	R_\epsilon = \frac{\epsilon_{Trotter}}{\epsilon_{Var}} \qquad R_E = \frac{\epsilon_{Trotter}}{E(\theta)}
	\label{eq:ratios}
\end{align}
of perturbative ($R_E$) and exact ($R_\epsilon$) error measures comparing Trotter and variational sequences. In Eq. (\ref{eq:ratios}), $C(\theta^T)$ denotes the perturbative distance for choosing the variational parameters equal to the Trotter choice $\theta_{r,j} = \frac{t c_j}{R}$. 

From the bound in Eq.~\eqref{eq:error_term}, we can also infer an approximate lower bound for the exact error ratio
\begin{align}
	R_\epsilon \geq \frac{\epsilon_{Trotter}}{C(\theta) + E(\theta)} + \mathcal{O}(\theta^4) \approx R_E,
	\label{eq:error_bound}
\end{align}
where we assumed that $C(\theta) \ll E(\theta)$ after a successful optimization of the parameters $\theta$.

As an example, we consider an XY model with a fully connected interaction graph that is described by the Hamiltonian \change{(up to unitary equivalence)}
\begin{align}
	H_{XY} = - \sum_{\mu>\nu} \left( J_{\mu\nu}^{(y)} Y_\mu Y_\nu + J_{\mu\nu}^{(z)} Z_\mu Z_\nu \right) + \sum_\mu h_\mu X_\mu.
	\label{eq:Ham_XY}
\end{align}
The sum over $\mu,\nu$ counts every pair of qubits and the symmetric matrices $J^{(y,z)}_{\mu\nu}$ encode the interaction strengths between qubits $\mu$ and $\nu$. In our numerical simulations, we normalize the Hamiltonian to $\norm{H} = \sqrt{n}$ to fix the time scale.

To calculate $C(\theta)$ for the XY model, we define a mapping between the single index $j \in \{1, ..., M\}$ used in the above derivations and the Hamiltonians in Eq.~\eqref{eq:Ham_XY} labelled by the qubit indices $\mu \in \{1, ..., n\}$ and the interaction type. The non-vanishing commutator terms that appear in $C(\theta)$ are of the form $\comm{Z_\mu Z_\nu}{X_\mu}, \comm{Y_\mu Y_\nu}{X_\mu}$ or $\comm{Z_\mu Z_\nu}{Y_\nu Y_\sigma}$. This allows the reduction of $C(\theta)$ to $\mathcal{O}\left(n^3\right)$ non-vanishing terms in the fully connected case and to $\mathcal{O}(n)$ non-vanishing terms in a nearest-neighbor XY model. See Appendix \ref{app:XY} for an explicit form of the perturbative distance $C(\theta)$. \change{Note that we are still free to turn off specific interactions from the fully connected graph to study lattice geometries of any dimension. We will now turn to a two-dimensional geometry for which there is no analytical solution using free fermion mapping, such as in one dimension.}

In Fig. \ref{fig:IsingXY3x3}, we plot the cost values $C, E$ and $\epsilon$, as well as the ratios $R_\epsilon$ and $R_E$ for different times $t$. The optimizer manages to find small values of $C(\theta)$ for all times. For small enough times $t < \frac{R}{\norm{H}} = 1$, the perturbative distance also incorporates a faithful indicator for the exact norm in a sense that the optimal parameters also yield a significant decrease in the exact error $\epsilon_{Var}$ of the variational sequence. Also, the error estimate $E$ and the ratio $R_E$ yield very good approximations for the true error ratio $R_\epsilon$ as predicted in Eq.~\eqref{eq:error_bound}. The numerics hence indicate that the approximate upper bound derived in Eq.~\eqref{eq:error_term} is tight.
\begin{figure*}[t]
	\centering
	\includegraphics[scale=0.7]{./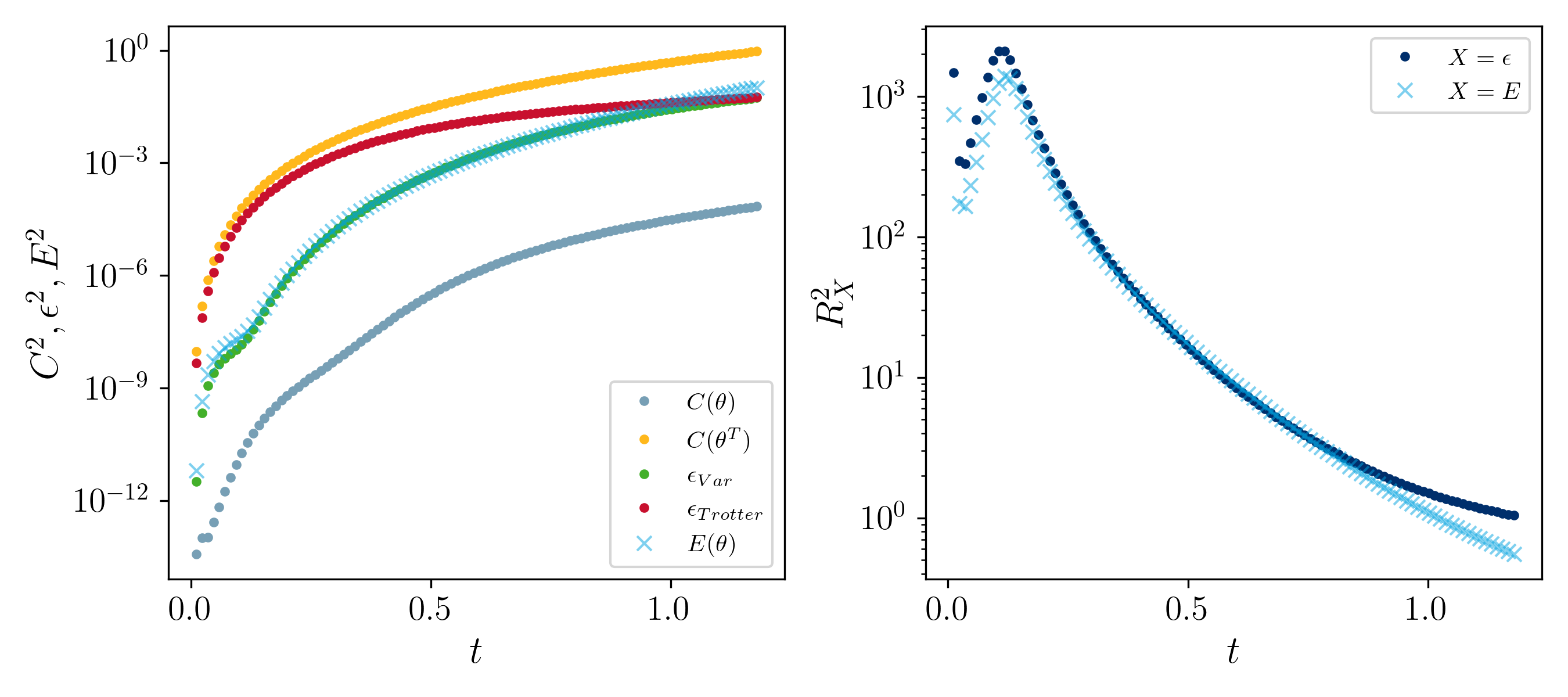}
	\caption{\textbf{Perturbative ($C, E$) and exact ($\epsilon$) error measures (left) and exact error ratio $R_\epsilon$ and the ratio of error estimates $R_E$ (right)} as defined in Eq.~\eqref{eq:ratios} evaluated on a nearest neighbor XY model on a $3\times3$ quadratic lattice and random interaction strength $J^{(y)}_{\mu,\nu}$ centered around 0.5, $J^{(z)}_{\mu\nu}$ centered around 1 and $h_\mu=0.25$.}
	\label{fig:IsingXY3x3}
\end{figure*}

In Fig. \ref{fig:IsingXY3x3_repeat}, the optimal parameters found before for a fixed time step $\sqrt{2n} t\in \{0.05, 0.5, 1.5\}$ are being repeated up to $K=10$ times to simulate longer times. As predicted in Proposition \ref{theo:long_time}, the square of the perturbative distance grows with a factor $K^2$. For total simulation times $\sqrt{2n} Kt < 1.5$, this scaling is also reflected in the exact cost $\epsilon$. Above this critical time, higher order contributions become non-negligible, which explains why the scaling $\epsilon$ deviates from $\mathcal{O}\left( K^2 \right)$. Remarkably, the error ratio between Trotter and variational sequences stays constant also far beyond this critical time until the improvement of the variational sequence eventually starts melting away at $t>1$.
\begin{figure*}[t]
	\centering
	\includegraphics[scale=0.59]{./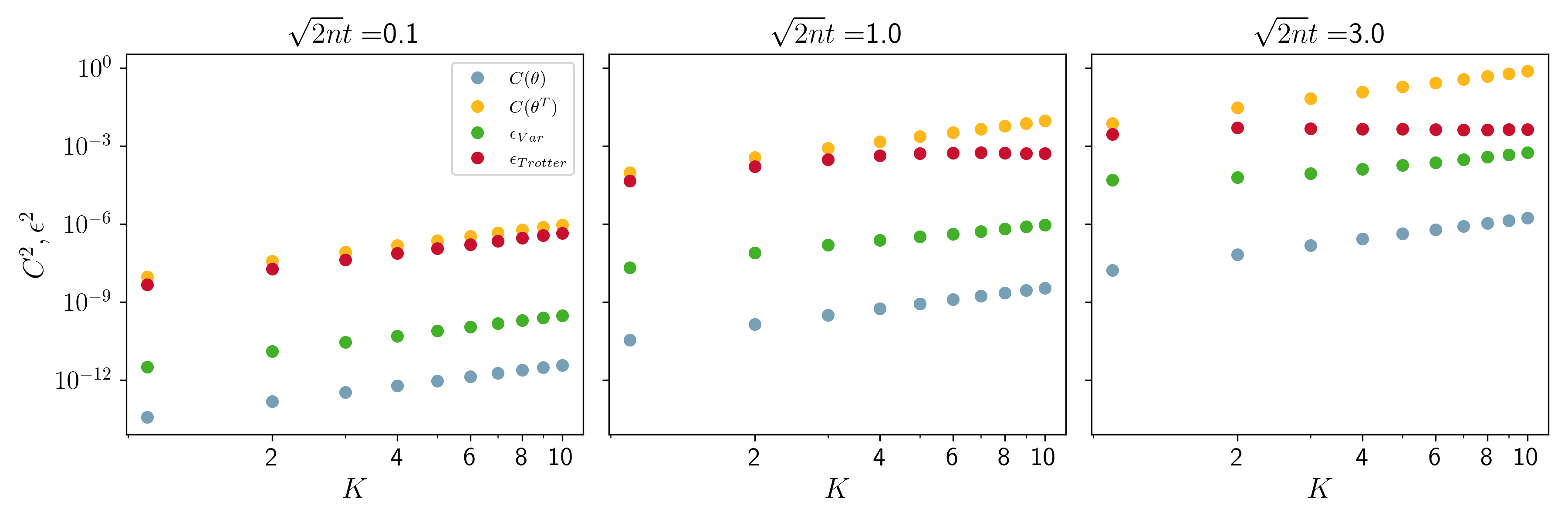}
	\caption{\textbf{Pertubartive cost values $C$ compared with exact norm cost function $\epsilon$ } for a repeated sequence of variational and Trotter solution evaluated on a nearest neighbor XY model on a $3\times3$ quadratic lattice and random interaction strength $J^{(y)}_{\mu,\nu}$ centered around 0.5, $J^{(z)}_{\mu\nu}$ centered around 1 and $h_\mu=0.25$. A single time step always corresponds to $R=3$ layers and the single step time increases from left to right $\sqrt{2n} t\in \{0.05, 0.5, 1.5\}$. From the log-log plot, one can read off $C(\theta^K)^2 = \mathcal{O}(K^2)$ as predicted by Proposition \ref{theo:long_time}.}
	\label{fig:IsingXY3x3_repeat}
\end{figure*}

Let us take this extrapolation to a practical case in which we have a gate budget $G = \mathcal{O}(KRn)$, that will scale with the number of qubits $n$ and the number of circuit repetitions $K$, and a maximal target accuracy $\epsilon$. A natural question from a practitioner's point of view is to ask, what is the maximally reachable simulation time $T=Kt$ with these fixed resources. We study this question in Fig.~\ref{fig:max_time} where we plot $T_{V}$ for variational and Trotter sequences of order $q \in \{1,2,4\}$ picking the maximal time step $t$ that corresponds to an error $\epsilon$ just below the threshold. While the variational sequence reaches larger simulation times than first and second order Trotter sequences, its maximal simulation time is comparable to the fourth order Trotter formula in the $\epsilon = 10^{-2}$ case and in the $\epsilon = 10^{-3}$ case for circuits with $K<25$. The simulation time of the variational sequence is then surpassed by that of a $q=4$ Trotter formula in the case $\epsilon = 10^{-3}$ at depths $K>25$. However, since higher order Trotter formulas require more gates for a single time step, the resolution of the time evolution for a given gate number also becomes more sparse. For NISQ applications, which are typically $K<20$, there might be only a single or no fourth order data point in the implementable range, depending on the system size. We also plot the ratio of simulation times $T_T/T_V$ for first order sequences to show that our method can reach simulation times that are more than 10 times longer than for a Trotter sequence of the same order for $\epsilon = 10^{-3}$ and $K \gtrsim 20$.

We have also computed the values of the  the ratios $R_C$ and $R_\epsilon$ for different times $t$ for a Transverse Field Ising model in Appendix \ref{app:Ising}. Also for this model, our approach leads to a dramatic improvement. This model however has a particularly simple commutator algebra that is routed in it's low connectivity. Therefore many of the two-qubit gates in a Trotter sequence commute with each other and, as a consequence, a first order Trotter sequence of the Ising model is unitarily equivalent to a second order Trotter sequence. Improving upon this sequence would require a third order perturbative distance analogous to our discussion above.

\section{Conclusion}
In this work, we presented a classically efficient way to estimate error measures for variational time evolution in the regime $\frac{t}{R}, \theta \ll \frac{1}{\norm{H}}$ and proposed an optimization routine for variational Hamiltonian simulation that is executed completely in classical pre-processing. The method is applicable to arbitrary $p$-local Hamiltonians and only requires computing commutators of individual terms in the Hamiltonian.

We show that -- analogous to a Trotter decomposition -- the linear contribution to the error in the parameters $\theta$ can be forced to exactly vanish. The remaining parameters can then be found to minimize the second order error terms for which the existence of a better solution than Trotter is always guaranteed within the perturbative regime. A generalization to higher orders is straight-forward and remains only a matter of diligence. The Trotter-inspired error scalings in system size and extrapolations in simulation time are conserved also when deviating from the Trotter solution. Our findings are backed up with numerical experiments on the XY model with random interaction strengths. 

For future work, it would be interesting to study other physically motivated models like non-local fermionic encodings. Also, an extension to time-dependent Hamiltonians would be interesting, although the extrapolation to larger times is not straight-forward for this case. Another possibility to simplify the computation of the perturbative distance is to change the exact error measure that is to be approximated. This can be done by fixing the initial state of the quantum simulation or to consider alternative error measures, such as the conservation of statistical moments of the Hamiltonian, as recently proposed in \cite{zhao2022}.

\begin{acknowledgments}
    This work received support from the German Federal Ministry of Education and Research via the funding program quantum technologies - from basic research to the market under contract number 13N16067 “EQUAHUMO”. It is also part of the Munich Quantum Valley, which is supported by the Bavarian state government with funds from the Hightech Agenda Bayern Plus.
\end{acknowledgments}

\newpage
\bibliography{literature}

\onecolumngrid
\appendix
\newpage
\section{Derivation of the Cost Function}
\subsection{An upper bound for the error}
\label{app:bound}
To derive an upper bound for the error of a variational gate sequence for time evolution, we start by reviewing a general upper bound for differences of unitaries shown in \cite{chakraborty2019}.
\begin{lemma}
	\label{lem:bound}
	Let $A, B$ be Hermitian operators and $\norm{.}$ any norm that admits unitary invariance, then
	\begin{align}
		\norm{e^{iA} - e^{iB}} \leq \norm{A - B}
	\end{align}
	\begin{proof}
		We begin by writing the difference operator $e^{i A} - e^{i B}$ in integral form
		\begin{align}
			e^{i A} - e^{i B} &= \int_0^1 dx \frac{d}{dx} \left[ e^{i \left(B + x(A - B) \right)} \right] = \int_0^1 dx \int_0^1 dy \left[ e^{i y \left(B + x(A - B) \right)} (A - B) \times e^{i (1-y) \left(B + x(A - B) \right)} \right].
			\label{eq:diff_op}
		\end{align}
		In the first step, we used the fundamental theorem of calculus and in the second step, we used the identity for the derivative of the exponential map
		\begin{align}
			\frac{d}{dx} e^{F(x)} = \int_0^1 dy \, e^{y F(x)} F'(x) e^{(1-y) F(x)}
		\end{align}
		for any operator valued function $F(x)$ and its derivative $F'(x)$. If we consider the norm of the operator in Eq.~\eqref{eq:diff_op}, we can use the triangle inequality and its unitary invariance to get to the result
		\begin{align}
			\norm{e^{i A} - e^{i B}} \leq &\int_0^1 dx \int_0^1 dy \norm{ e^{i y \left(B + x(A - B) \right)} (A - B) e^{i (1-y) \left(B + x(A - B) \right)}} \nonumber \\
			=&\int_0^1 dx \int_0^1 dy \norm{A - B} = \norm{A - B}.
		\end{align}
	\end{proof}
\end{lemma}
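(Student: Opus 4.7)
The plan is to write $e^{iA}-e^{iB}$ as an integral along the straight-line path $B_x = (1-x)B + xA$ in the space of Hermitian operators, and then to bound the resulting integrand using the triangle inequality together with unitary invariance of the norm. Concretely, I would apply the fundamental theorem of calculus to the operator-valued function $x \mapsto e^{iB_x}$ on $[0,1]$, writing
\begin{equation*}
    e^{iA} - e^{iB} = \int_0^1 \frac{d}{dx}\, e^{iB_x}\, dx.
\end{equation*}

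The key input is the derivative of the operator exponential for a non-commuting family, i.e.\ Duhamel's formula
\begin{equation*}
    \frac{d}{dx}\, e^{F(x)} = \int_0^1 e^{yF(x)}\, F'(x)\, e^{(1-y)F(x)}\, dy,
\end{equation*}
which I would invoke with $F(x) = iB_x$, so that $F'(x) = i(A-B)$. This yields a double-integral representation of $e^{iA} - e^{iB}$ whose integrand is a product of three operators, with the constant middle factor $i(A-B)$ flanked by two exponentials that carry all of the $x,y$-dependence.

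Taking the norm and pushing it through both integrals via the triangle inequality, I would then exploit that each $B_x$ is Hermitian (being a convex combination of Hermitian operators), so the factors $e^{iyB_x}$ and $e^{i(1-y)B_x}$ are unitary. Unitary invariance of $\norm{\cdot}$ then peels those factors off and leaves $\norm{A-B}$ inside the integrand; integrating a constant over $[0,1]^2$ produces the claimed inequality.

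The main conceptual obstacle is computing $\frac{d}{dx} e^{iB_x}$ correctly: the naive guess $i(A-B)\, e^{iB_x}$ would deliver the bound immediately but is wrong in general because $B_x$ and $A-B$ need not commute, so one really does need Duhamel's formula (which itself can be derived from the power series of the exponential and term-by-term differentiation, keeping track of operator ordering). Once that identity is accepted, everything else is a direct consequence of the triangle inequality and unitary invariance, with Hermiticity of $A$ and $B$ entering only at the final step to ensure the flanking exponentials are unitary.
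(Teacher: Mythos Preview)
Your proposal is correct and follows essentially the same approach as the paper: write $e^{iA}-e^{iB}$ via the fundamental theorem of calculus along the linear interpolation $B_x=B+x(A-B)$, apply Duhamel's formula for the derivative of the exponential, then use the triangle inequality and unitary invariance of the norm (with Hermiticity of $B_x$ ensuring the flanking factors are unitary) to obtain the bound. There is no substantive difference between your argument and the paper's.
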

Let $H=\sum_{j=1}^M c_j H_j$ be Hermitian. We are interested in a time evolution operator built from gates of the form $e^{-i \theta H_j}$. For this choice, the variational gate sequence will have the form of Eq.~\eqref{eq:ansatz}. We can use the Baker-Campbell-Hausdorff formula to find an anti-Hermitian operator $Z$ that combines all exponentials in Eq.~\eqref{eq:ansatz} into one. Using Lemma \ref{lem:bound}, we then get a general bound for the difference operator
\begin{align}
	\norm{U(0,t) - U_{var}} = \norm{e^{-i t H} - e^Z} \leq \norm{itH + Z}.
	\label{eq:gen_bound}
\end{align}
$Z$ contains infinitely many terms in general. To transform Eq.~\eqref{eq:gen_bound} into something that is classically computable, we will truncate $Z$ in a perturbative regime.

\subsection{Truncation of BCH-formula}
\label{app:BCH_trunc}
In the limit of small time steps $\frac{t}{R} \to 0$, a Trotter formula becomes exact. We want to keep $\frac{t}{R}$ small but finite in the following. We further assume the parameters $\theta_{r,j}$ to be of the same order as $\frac{t}{R}$. Consider the following supplementary Lemma
\begin{lemma}
	\label{lem:BCH_trunc}
	Let $A_j, j\in \mathfrak{J}$ for some index set $\mathfrak{J}$, be a list of bounded operators and denote the maximal operator norm $\max_j \norm{A_j}_\infty =: a$. It holds
	\begin{align}
		&\prod_{j \in \mathfrak{J}}^\leftarrow e^{A_j} = e^B \\
		B &= \sum_{j\in \mathfrak{I}} A_j + \frac{1}{2} \sum_{j>k} \comm{A_j}{A_k} + \frac{1}{12} \sum_{j \neq k} \comm{A_j}{\comm{A_j}{A_{k}}} + \frac{1}{6} \sum_{j>k>l} \left( \comm{A_j}{\comm{A_k}{A_{l}}} + \comm{A_l}{\comm{A_k}{A_{j}}} \right) + \mathcal{O}\left(a^4\right),
		\label{eq:multi_BCH}
	\end{align}
	where we assumed an implicit ordering on the index set $\mathfrak{J}$, and the arrow above the product implies that the exponential corresponding to the smallest index is being multiplied from the right.
	\begin{proof}
		We proceed by induction over the size of the index set $\mathfrak{J}$. For $|\mathfrak{J}|=1$, the statement becomes trivial and for $|\mathfrak{J}|=2$ equivalent to the Baker-Campbell-Hausdorff formula. We hence assume that the statement holds for an index set $\mathfrak{J}$ of size $I-1$. Let us rename $Z$ from above to $Z_{I-1}$ to emphasize the size of the index set. If we add an $I^\text{th}$ index, we get
		\begin{align}
			e^{A_I} \prod_{j \in \mathfrak{J}}^\leftarrow e^{A_j} = e^{A_I} e^{Z_{I-1}} = e^{Z_I},
		\end{align}
		where we get $Z_I$ from the Baker-Campbell-Hausdorff lemma
		\begin{align}
			Z_I &= A_I + Z_{I-1} + \frac{1}{2} \comm{A_I}{Z_{I-1}} + \frac{1}{12} \comm{A_I - Z_{I-1}}{\comm{A_I}{Z_{I-1}}} + \mathcal{O}\left( a^4 \right).
			\label{eq:Z_I}
		\end{align}
		Let us recollect all orders in $a$. For the first order we get $A_I + \sum_{i=1}^{I-1} A_i = \sum_{i=1}^{I} A_i$. For the second order term, we gather two terms from Eq.~\eqref{eq:Z_I}
		\begin{align} 
			\frac{1}{2} \sum_{i>j}^{I-1} \comm{A_i}{A_j} + \frac{1}{2} \comm{A_I}{\sum_{i=1}^{I-1} A_i} + \mathcal{O}(a^3) = \frac{1}{2} \sum_{i>j}^{I} \comm{A_i}{A_j} + \mathcal{O}(a^3)
		\end{align}
		Finally, the third order term has three contributions from Eq.~\eqref{eq:Z_I}
		\begin{align}
			&\frac{1}{12} \sum_{j \neq k}^{I-1} \comm{A_j}{\comm{A_j}{A_k}} + \frac{1}{6} \sum_{j>k>l}^{I-1} \left( \comm{A_j}{\comm{A_k}{A_{l}}} + \comm{A_l}{\comm{A_k}{A_{j}}} \right) + \frac{1}{4} \sum_{j>k}^{I-1} \comm{A_I}{\comm{A_j}{A_k}} \nonumber \\
			&+ \frac{1}{12} \sum_j^{I-1} \comm{A_I}{\comm{A_I}{A_j}} + \frac{1}{12} \sum_{j,k}^{I-1} \comm{A_j}{\comm{A_k}{A_I}} \nonumber \\
			&= \frac{1}{12} \sum_{j \neq k}^{I-1} \comm{A_j}{\comm{A_j}{A_k}} + \frac{1}{6} \sum_{j>k>l}^{I-1} \left( \comm{A_j}{\comm{A_k}{A_{l}}} + \comm{A_l}{\comm{A_k}{A_{j}}} \right) + \frac{1}{4} \sum_{j>k}^{I-1} \comm{A_I}{\comm{A_j}{A_k}} \nonumber \\
			&+ \frac{1}{12} \sum_j^{I-1} \left( \comm{A_I}{\comm{A_I}{A_j}} + \comm{A_j}{\comm{A_j}{A_I}} \right) + \frac{1}{12} \sum_{j>k}^{I-1} \left( - \comm{A_I}{\comm{A_j}{A_k}} + \comm{A_k}{\comm{A_j}{A_I}} \right) + \frac{1}{12} \sum_{j<k}^{I-1} \comm{A_j}{\comm{A_k}{A_I}} \nonumber \\
			&= \frac{1}{12} \sum_{j \neq k}^I \comm{A_j}{\comm{A_j}{A_{k}}} + \frac{1}{6} \sum_{j>k>l} \left( \comm{A_j}{\comm{A_k}{A_{l}}} + \comm{A_l}{\comm{A_k}{A_{j}}} \right),
		\end{align}
		where we first split the sum $\sum_{j,k} = \sum_{j=k} + \sum_{j>k} + \sum_{j<k}$ and used the Jacobi-identity of the commutator in the first step. In the second, we renamed indices to absorb the fourth sum into the second terms and the rest into the first term accordingly. Taking all three orders together, we conclude with the statement. 
	\end{proof}
\end{lemma}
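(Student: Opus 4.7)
The plan is to prove the identity by induction on the cardinality $I := |\mathfrak{J}|$, using the ordinary two-factor Baker--Campbell--Hausdorff (BCH) formula truncated at third order as a black box. For $I=1$ the statement is trivial, and for $I=2$ it reduces to the standard BCH expansion. For the inductive step I denote by $Z_{I-1}$ the exponent produced by the first $I-1$ factors, so that $e^{A_I} e^{Z_{I-1}} = e^{Z_I}$, and apply two-factor BCH to obtain
\begin{align}
	Z_I = A_I + Z_{I-1} + \tfrac{1}{2}\comm{A_I}{Z_{I-1}} + \tfrac{1}{12}\comm{A_I - Z_{I-1}}{\comm{A_I}{Z_{I-1}}} + \mathcal{O}(a^4).
\end{align}
I would then substitute the inductive expression for $Z_{I-1}$ into this recursion and collect terms by order in $a$.

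First and second order are immediate. The linear piece gives $\sum_{j \le I} A_j$ directly. In second order the inductive contribution $\tfrac{1}{2}\sum_{I-1 \ge j > k}\comm{A_j}{A_k}$ combines with $\tfrac{1}{2}\sum_{k \le I-1}\comm{A_I}{A_k}$ arising from $\tfrac{1}{2}\comm{A_I}{Z_{I-1}}$; because $I$ is the largest index the two pieces merge cleanly into the required sum $\tfrac{1}{2}\sum_{I \ge j > k}\comm{A_j}{A_k}$.

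The delicate step is third order. There I would collect contributions from three sources: (i) the inductive third-order part of $Z_{I-1}$, which already has the correct form for indices $\le I-1$; (ii) the second-order part of $Z_{I-1}$ substituted into $\tfrac{1}{2}\comm{A_I}{Z_{I-1}}$, yielding $\tfrac{1}{4}\sum_{I-1 \ge j > k}\comm{A_I}{\comm{A_j}{A_k}}$; and (iii) the triple commutator $\tfrac{1}{12}\comm{A_I - Z_{I-1}}{\comm{A_I}{Z_{I-1}}}$ evaluated using only the linear piece of $Z_{I-1}$, which gives the self-nested terms $\tfrac{1}{12}\sum_j\comm{A_I}{\comm{A_I}{A_j}}$ and $\tfrac{1}{12}\sum_j\comm{A_j}{\comm{A_j}{A_I}}$ plus the unrestricted double sum $\tfrac{1}{12}\sum_{j,k}\comm{A_j}{\comm{A_k}{A_I}}$. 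The main obstacle is to show that, after splitting this double sum into $\sum_{j=k} + \sum_{j>k} + \sum_{j<k}$ and renaming dummy indices, the coefficient mismatch with (ii) is cured precisely by the Jacobi identity
\begin{align}
	\comm{A_I}{\comm{A_j}{A_k}} + \comm{A_j}{\comm{A_k}{A_I}} + \comm{A_k}{\comm{A_I}{A_j}} = 0,
\end{align}
so that everything repackages into the two target sums $\tfrac{1}{12}\sum_{I \ge j \neq k}\comm{A_j}{\comm{A_j}{A_k}}$ and $\tfrac{1}{6}\sum_{I \ge j > k > l}\bigl(\comm{A_j}{\comm{A_k}{A_l}} + \comm{A_l}{\comm{A_k}{A_j}}\bigr)$, closing the induction.

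The $\mathcal{O}(a^4)$ remainder is then a short bound: the two-factor BCH error applied to the pair $(A_I, Z_{I-1})$ is a convergent series in nested commutators of these operators, hence polynomial in $\norm{A_I}_\infty$ and $\norm{Z_{I-1}}_\infty$, and the inductive hypothesis gives $\norm{Z_{I-1}}_\infty \le (I-1)a + \mathcal{O}(a^2)$, with implicit constants that may depend on the (fixed) cardinality $I$. The entire argument is thus mechanical once the Jacobi-based rearrangement at third order is carried out carefully; that bookkeeping is the only non-routine ingredient.
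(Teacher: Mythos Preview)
Your proposal is correct and follows essentially the same approach as the paper: induction on $|\mathfrak{J}|$, applying two-factor BCH to $e^{A_I}e^{Z_{I-1}}$, and collecting orders with the Jacobi identity handling the third-order bookkeeping. One small wording slip: contribution (iii) yields only two pieces, $\tfrac{1}{12}\sum_j\comm{A_I}{\comm{A_I}{A_j}}$ and the unrestricted double sum $\tfrac{1}{12}\sum_{j,k}\comm{A_j}{\comm{A_k}{A_I}}$; the term $\tfrac{1}{12}\sum_j\comm{A_j}{\comm{A_j}{A_I}}$ you list separately is the $j=k$ diagonal of that double sum, not an independent contribution.
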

We can now apply Lemma \ref{lem:BCH_trunc} to the BCH Hamiltonian $Z$ of a variational decomposition of $H$. While we will need the third order term of Eq.~\eqref{eq:multi_BCH} when discussing error scalings, for now we only make use of the first and second order term
\begin{align}
	-itH - Z &= -it \sum_{j=1}^M c_j H_j + i \sum_{r, j} \theta_{r,j} H_j - \frac{1}{2} \sum_{(r,j)>(r',j')} \theta_{r,j} \theta_{r',j'} \comm{H_j}{H_{j'}} + \mathcal{O}(\theta^3) \nonumber \\
	&= i \sum_{j} \left( \sum_r \theta_{r,j} - t c_j \right) H_j - \frac{1}{2} \sum_{(r,j)>(r',j')} \theta_{r,j} \theta_{r',j'} \comm{H_j}{H_{j'}} + \mathcal{O}(\theta^3), \label{eq:error_pre}
\end{align}
where we implied an order on the index set $\mathcal{J} = \{(r,j)\}_{r,j}$. The specific ordering is not important for our discussion. Without loss of generality, we will choose the order $(r,j)<(r',j')$, if and only if $r<r'$ or $r=r'$ and $j<j'$. If one fixes the last layer of parameters as in Eq.~\eqref{eq:cond}, the linear contribution in Eq.~\eqref{eq:error_pre} will vanish. For the sake of clarity, we will still write $\theta_{R,j}$ in the following even though it is not a free parameter. Finally, we can write the second order term of the upper bound 
\begin{align}
	\norm{itH + Z} &= \norm{\frac{1}{2} \sum_{(r,j)>(r',j')} \theta_{r,j} \theta_{r',j'} \comm{H_j}{H_{j'}}} \nonumber \\
	&= \norm{\frac{1}{2} \sum_{j>j'} \left( \sum_r\theta_{r,j} \theta_{r,j'} \right) \comm{H_j}{H_{j'}} + \frac{1}{2} \sum_{j,j'} \left(\sum_{r>r'} \theta_{r,j} \theta_{r',j'} \right) \comm{H_j}{H_{j'}} } \nonumber \\
	&= \norm{ \frac{1}{2} \sum_{j>j'} \left[ \sum_r \left( \theta_{r,j} \theta_{r,j'} \right) + \sum_{r>r'} \left( \theta_{r,j} \theta_{r',j'} - \theta_{r,j'} \theta_{r',j} \right) \right] \comm{H_j}{H_{j'}} }.
	\label{eq:error_gen}
\end{align}
We dropped vanishing commutators and regathered similar terms in the last step of Eq.~\eqref{eq:error_gen}. For the sake of clarity, we will define an abbreviation for the coefficient of the commutator terms
\begin{align}
	\chi_{j,j'} := \frac{1}{2} \sum_r \theta_{r,j} \theta_{r,j'} + \frac{1}{2} \sum_{r>r'} \left( \theta_{r,j} \theta_{r',j'} - \theta_{r,j'} \theta_{r',j} \right).
\end{align}
After plugging in Eq.~\eqref{eq:cond}, that is necessary to make the linear contribution in Eq.~\eqref{eq:error_pre} vanish, we get
\begin{align}
	\chi_{j,j'} = \frac{1}{2} t^2 c_j c_{j'} - t c_{j'} \sum_r^{R-1} \theta_{rj} + \sum_r^{R-1} \theta_{r,j} \theta_{r,j'} + \sum_{r>r'}^{R-1} \theta_{r,j} \theta_{r',j'}.
\end{align}

\subsection{Optimization potential (Proof of Proposition \ref{theo:opt_pot})}
\label{app:proof_pot}
\begin{proof}
	The backwards direction is trivial as $C(\theta)\geq 0$ per definition. As for the forward direction, assume that $\theta^T$ is a global minimum of $C$ and hence also of $C^2$. The cost value at the Trotter parameters $\theta^T_{r,j} = \frac{t c_j}{R}$ reads
	\begin{align}
		C\left(\theta^T\right)^2 = - 2^{-n} \sum_{j>j'} \sum_{k>k'} \frac{t^4 c_j c_{j'} c_k c_{k'}}{4R^2} \Tr\left( \comm{H_j}{H_{j'}} \comm{H_k}{H_{k'}} \right),
	\end{align}
	which is non-zero in general. For a global minimum, the gradient of $C$ at the point $\theta^T$ must vanish. The gradient of $C$ at the Trotter solution reads
	\begin{align}
		\partial_{q,l} C(\theta)^2 &= - 2^{-n+1} \sum_{j>j'} \sum_{k>k'} \left(\partial_{q,l} \chi_{j,j'}\right) \chi_{k,k'} \Tr\left( \comm{H_j}{H_{j'}} \comm{H_k}{H_{k'}} \right).
		\label{eq:C_grad_app}
	\end{align}
	Here, we introduces the shorthand $\partial_{r,j} = \frac{\partial}{\partial \theta_{r,j}}$. The factor 2 in Eq.~\eqref{eq:C_grad_app} comes from the fact that the cost function is symmetric under the exchange of the factors $\chi_{j,j'}$ and $\chi_{k,k'}$. The derivative of $\chi$ reads
	\begin{align}
		2\partial_{q,l} \chi_{j,j'} &= \delta_{l,j} \theta_{q,j'} + \delta_{l,j'} \theta_{q,j} + \sum_{q>r'} \left( \delta_{l,j} \theta_{r',j'} - \delta_{l,j'} \theta_{r',j} \right) + \sum_{r>q} \left( \theta_{r,j} \delta_{l,j'} - \theta_{r,j'} \delta_{l,j} \right) \nonumber \\
		&= \delta_{l,j} \left( \sum_{r \leq q} \theta_{q,j'} - \sum_{r>q} \theta_{q,j'} \right) + \delta_{l,j'} \left( \sum_{r<q} \theta_{q,j} - \sum_{r\geq q} \theta_{q,j} \right)
	\end{align}
	and at the Trotter solution
	\begin{align}
		2\partial_{q,l} \chi_{j,j'} &= \delta_{l,j} \frac{t c_{j'}}{R} (2q - R) + \delta_{l,j'} \frac{t c_j}{R} (2(q-1) - R).
	\end{align}
	Finally, the gradient of the cost function reads
	\begin{align}
		\left( \partial_{q,l} C^2 \right) \left( \left\{\frac{t c_j}{R}\right\}_j\right) = &-2^{-n} \sum_{j<l} \sum_{k>k'} \frac{t c_j}{R} (2q - R) \frac{t^2 c_k c_{k'}}{2R} \Tr\left( \comm{H_l}{H_{j}} \comm{H_k}{H_{k'}} \right) \nonumber \\
		&+2^{-n} \sum_{j>l} \sum_{k>k'} \frac{t c_{j}}{R} (2(q-1) - R) \frac{t^2 c_k c_{k'}}{2R} \Tr\left( \comm{H_l}{H_j} \comm{H_k}{H_{k'}} \right) \nonumber \\
		=& 2^{-n} \left((2(q-1) - R) \sum_{j>l} - (2q - R) \sum_{j<l} \right) \frac{t c_{j}}{R} \sum_{k>k'} \frac{t^2 c_k c_{k'}}{2R} \Tr\left( \comm{H_l}{H_j} \comm{H_k}{H_{k'}} \right).
	\end{align}
	We have $\nabla C\left(\theta^T\right) = 0$ if and only if
	\begin{align}
		\sum_{j>l} \sum_{k>k'} c_j c_k c_{k'} \Tr\left( \comm{H_l}{H_j} \comm{H_k}{H_{k'}} \right) = \frac{2q - R}{2(q-1) - R} \sum_{j<l} \sum_{k>k'} c_j c_k c_{k'} \Tr\left( \comm{H_l}{H_j} \comm{H_k}{H_{k'}} \right) \quad \forall q,l
		\label{eq:C_grad0}
	\end{align}
	Since $R\geq 3$, the right hand side of Eq.~\eqref{eq:C_grad0} will have a non-trivial dependence on $q$, but the left hand side will not. For instance, plug in $q=1$ and $q=R$ with arbitrary but fixed $l$, then the prefactor on the right hand side becomes $\frac{R-2}{R}$ and $\frac{R}{R-2}$. As for $R\geq3$, these are unequal, so the left hand side of Eq.~\eqref{eq:C_grad0} must vanish. Since $l$ is arbitrary, also $C(\theta)=0 \, \forall t$ which means Trotterization is exact.
\end{proof}

\subsection{Translational Invariant Models (Proof of Proposition \ref{theo:trans_inv})}
\label{app:trans_inv}
For the following proofs, we are interested in the leading order term $E(\theta)$ in the error from the truncation to the perturbative regime. This term will become dominant by the time, we have minimized $C(\theta)$. As before in appendix \ref{app:BCH_trunc}, we expand $\norm{itH + Z}$ now up to third order and separate $C$ from the third order error term $E$ using the triangle inequality
\begin{align}
	\norm{itH + Z} \leq C(\theta) + E(\theta) + \mathcal{O}\left( \theta ^4 \right).
\end{align}
Using Lemma \ref{lem:BCH_trunc}, we can read off
\begin{align}
	E(\theta) &= \norm{ \frac{1}{12} \sum_{(r,j) \neq (r',j')} \theta_{r,j} \theta_{r,j} \theta_{r',j'} \comm{H_j}{\comm{H_j}{H_{j'}}} + \frac{1}{6} \sum_{(r,j)>(u,k)>(v,l)} \theta_{r,j} \theta_{u,k} \theta_{v,l} \left( \comm{H_j}{\comm{H_k}{H_{l}}} + \comm{H_l}{\comm{H_k}{H_{j}}} \right) } \nonumber \\
	&= \Bigg\Vert \frac{1}{12} \sum_{j \neq j'} \left( \sum_r \theta_{r,j} \theta_{r,j} \theta_{r,j'} + \sum_{r\neq r'} \theta_{r,j} \theta_{r,j} \theta_{r',j'} \right) \comm{H_j}{\comm{H_j}{H_{j'}}} + \frac{1}{6} \Bigg( \sum_{j>k>l} \sum_r \theta_{r,j} \theta_{r,k} \theta_{r,l} \nonumber \\
	&+ \sum_{\substack{j>k\\l}} \sum_{r>v} \theta_{r,j} \theta_{r,k} \theta_{v,l} + \sum_{\substack{k>l\\j}} \sum_{r>u} \theta_{r,j} \theta_{u,k} \theta_{u,l} + \sum_{j,k,l} \sum_{r>u>v} \theta_{r,j} \theta_{u,k} \theta_{v,l} \Bigg) \left( \comm{H_j}{\comm{H_k}{H_{l}}} + \comm{H_l}{\comm{H_k}{H_{j}}} \right) \Bigg\Vert
	\label{eq:E}
\end{align}
where we split the sums as before using an implicit row-major ordering on the multiindex $(r,j)$. For the sake of simplicity, we gather all coefficients into a tensor $\Phi_{j,k,l}$ of the two-fold nested commutators and define $E(\theta) = \norm{\sum_{j,k,l} \Phi_{j,k,l} \comm{H_j}{\comm{H_k}{H_{l}}}}$.

\begin{proof}[Proof of Proposition \ref{theo:trans_inv}]
	If we impose translational symmetry on the system of interest, we can split such a symmetric, $p$-local ($p$-nearest neighbor) Hamiltonian into $A$ different interaction types that act the same way on every qubit, i.e.
	\begin{align}
		H = \sum_{a=1}^A \sum_{j=1}^n c_a H^{(j)}_{a}.
	\end{align}
	The index $(j)$ is labeling the qubit number and $a$ the interaction type. The Hamiltonians $H_a^{(j)}$ have non-trivial support on the $p$ qubits starting with $j$. In one dimension, these are $j, j+1,...,j+p-1$. For higher dimensions, different interaction types may admit support on clusters of at most $p$ neighboring qubits. We do not make any assumptions on lattice geometry or dimension in the following and formally denote by $\mathcal{N}_p(j)$ the set of Hamiltonian terms that correspond to a $p$-nearest neighbor of the $j^\text{th}$ term. The parameters are chosen to be translational invariant by just dropping the dependence on the qubit number, i.e. $\theta_{r,j} = \theta_{r,a}$. As a consequence, also $\chi_{j,j'} = \chi_{a,a'}$ and $\Phi_{j,k,l} = \Phi_{a,b,c}$.
	
	Since we introduced another index $j \to (a,j)$, we need to refine the chosen ordering. Without loss of generality, we choose
	\begin{align}
		(r,a,j): \quad (1,1,1) \leq (1,1,j) \leq (1,a,j) \leq (r,a,j)
	\end{align}
	To simplify the cost function defined in Eq.~\eqref{eq:cost}, we decompose the sum 
	\begin{align}
		\sum_{(a,j)>(a',j')} = \sum_{\substack{j>j' \\ a=a'}} + \sum_{a>a'} \sum_{j,j'}.
	\end{align}
	We plug in the above structure into the definition of the cost function,  and use the locality of the Hamiltonian to get
	\begin{align}
		C(\theta) &= \norm{ \sum_{j>j'} \chi_{a,a} \comm{H_a^{(j)}}{H_a^{(j')}} + \sum_{a>a'} \sum_{j,j'} \chi_{a,a'} \comm{H_a^{(j)}}{H_{a'}^{(j')}} } \nonumber \\
		&= \norm{ \sum_j \left[ \sum_{ \substack{j' \in \mathcal{N}_p(j) \\ j'>j  }} \chi_{a,a} \comm{H_a^{(j')}}{H_a^{(j)}} + \sum_{a>a'} \sum_{j' \in \mathcal{N}_p(j)} \chi_{a,a'} \comm{H_a^{(j)}}{H_{a'}^{(j')}} \right] } \nonumber \\
		&\leq n \norm{ \sum_{ \substack{j' \in \mathcal{N}_p(j) \\ j'>j  }} \chi_{a,a} \comm{H_a^{(j')}}{H_a^{(j)}} + \sum_{a>a'} \sum_{j' \in \mathcal{N}_p(j)} \chi_{a,a'} \comm{H_a^{(j)}}{H_{a'}^{(j')}} } \quad \text{for one fixed } j \nonumber \\
		&=: n C_{\rm unit}(\theta),
	\end{align}
	where we used the triangle inequality in the last step. From translational invariance we know that the second last term yields the same value for every $j$ yielding a factor $n$. If the commutators $\comm{H_j}{H_{j'}}$ are orthogonal with respect to the Hilbert-Schmidt product, the sum $\sum_j$ can be pulled out of the squared norm yielding $n$ equal addends and hence $C(\theta) = \sqrt{n} C_{\rm unit}(\theta)$.
	
	The same way, we can reduce the computation of $E(\theta)$ to a unit cell. However, as we consider two-fold nested commutators here, the unit cell is of size $(3p-2)^D$
	\begin{align}
		E(\theta) &= \norm{ \sum_j \left[ \sum_{a,b,c} \sum_{k \in \mathcal{N}_p(j)} \sum_{l \in \mathcal{N}_p(k)} \Phi_{a,b,c} \comm{H^{(j)}_a}{\comm{H^{(k)}_b}{H^{(l)}_c}} \right] } \nonumber \\
		&\leq n E_{\rm unit}(\theta)
	\end{align}
	Again, if the commutators $\comm{H^{(j)}_a}{\comm{H^{(k)}_b}{H^{(l)}_c}}$ are orthogonal with respect to the Hilbert-Schmidt inner product, we get $E(\theta) = \sqrt{n} E_{\rm unit}(\theta)$.
\end{proof}

\subsection{Long simulation times (Proof of Proposition \ref{theo:long_time})}
\label{app:long_times}
\begin{proof}
	As Eq.~\eqref{eq:Uvar_long} is equivalent to the ansatz before by just mapping $R \to KR$, $t \to Kt$ and identifying $\theta_{r,j} = \theta_{r \% R,j}$, the cost function will have the same form as before with modified $\chi$ 
	\begin{align}
		\chi_{j,j'}^{(KR)} &= \frac{1}{2} \sum_{r=1}^{KR} \left( \theta_{r,j} \theta_{r,j'} \right) + \frac{1}{2} \sum_{r>r'}^{KR} \left( \theta_{r,j} \theta_{r',j'} - \theta_{r,j'} \theta_{r',j} \right) \nonumber \\
		&= \frac{K}{2} \sum_{r=1}^{R} \left( \theta_{r,j} \theta_{r,j'} \right) + \frac{K}{2} \sum_{r>r'}^{R} \left( \theta_{r,j} \theta_{r',j'} - \theta_{r,j'} \theta_{r',j} \right) = K \chi_{j,j'}^{(R)},
		\label{eq:chi_long}
	\end{align}
	where we denoted the number of layers as a superscript $\chi^{(R)}$ to better differentiate the two coefficients. The factor $K$ comes from the identification of $\theta_{r,j} = \theta_{r \% R,j}$. While the first factor is trivial, the second comes from splitting the antisymmetric sum into 
	\begin{align}
		\sum_{r>r'}^{KR} = \frac{K(K+1)}{2} \sum_{r>r'}^{R} + \frac{(K-1)K}{2}\sum_{r<r'}^{R} = \frac{K(K+1)}{2} \sum_{r>r'}^{R} - \frac{(K-1)K}{2}\sum_{r>r'}^{R} = K \sum_{r>r'}^{R}
	\end{align}
\end{proof}
In the same way, we can plug in the duplicated parameters $\theta^K$ into $E(\theta)$. We get the following identities
\begin{align}
	\sum_{r}^{KR} = K \sum_{r}^R \qquad \text{and} \qquad \sum_{r \neq r'}^{KR} = K^2 \sum_{r\neq r'}^R + K(K-1) \sum_{r=r'}^R,
\end{align}
which yields a global factor $K^2$ for the first sum $\sum_{j\neq j'}$ in Eq.~\eqref{eq:E}. For the second sum, unfortunately we do not have a common factor to pull out. We have
\begin{align}
	\sum_{r>u}^{KR} &= \frac{K(K+1)}{2} \sum_{r>u}^R + \frac{K(K-1)}{2} \sum_{r\leq u}^R = \sum_{r>u}^R + \frac{K(K-1)}{2} \sum_{r, u}^R \nonumber \\
	\sum_{r>u>v}^{KR} &= K \sum_{r>u>v}^R + \frac{K(K-1)(K+1)}{6} \sum_r^R \sum_{u>v}^R + \frac{K(K-1)(K-2)}{6} \sum_r^R \sum_{u \leq v}^R \nonumber \\
	&= K \sum_{r>u>v}^R + \frac{K(K-1)}{2} \sum_r^R \sum_{u>v}^R.
\end{align}
In the last step we used that the sum over $\sum_{r,u,v}$ vanishes because of antisymmetry of the commutator. While some of the above contributions multiply the error before duplication by a factor, others are non-zero extra terms. We thus get an upper bound
\begin{align}
	E(\theta^K) \leq K E(\theta) + \frac{K(K-1)}{2} \Lambda
\end{align}
involving the rest term
\begin{align}
	\Lambda = \Bigg\Vert &\left( \sum_{\substack{j>k \\l}} \sum_{r>u} - \sum_{\substack{j>k \\l}} \sum_{r\leq u} -2 \sum_{\substack{j\leq k \\l}} \sum_{r\leq u} \right) \theta_{rj} \theta_{uk} \theta_{vl} \left( \comm{H_j}{\comm{H_k}{H_{l}}} + \comm{H_k}{\comm{H_j}{H_{j}}} \right) \nonumber \\
	&+ 3 \sum_{j,k,l} \sum_{r>u>v} \theta_{rj} \theta_{uk} \theta_{vl} \left( \comm{H_k}{\comm{H_j}{H_{j}}} \right) \Bigg\Vert.
\end{align}

\section{Perturbative Distance of a Transverse field Ising model}
\begin{figure}[t]
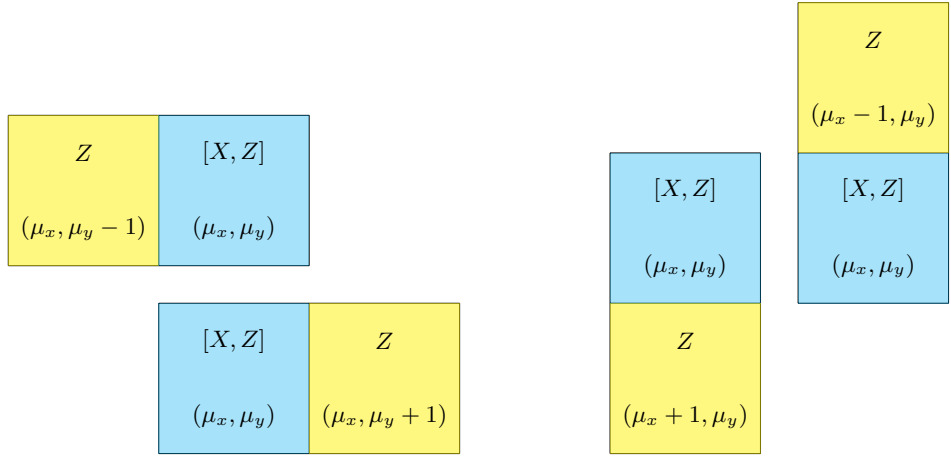

	\centering
	\horiz
	\caption{\textbf{Visualization of the perturbative distance} for a two-dimensional nearest neighbor Ising model. Instead of defining cumbersome notation as in Eq.~\eqref{eq:cost_TFIM2D}, it is often simpler to think about the commutator structure in this pictorial manner.}
	\label{fig:TFIM_terms}
\end{figure}

\label{app:Ising}
We apply the presented method to a general Transverse field Ising model (TFIM) with full connectivity, which is described by the Hamiltonian
\begin{align}
	H_{TFIM} = - \sum_{\mu,\nu} J_{\mu\nu} Z_\mu Z_\nu + \sum_\mu h_\mu X_\mu.
	\label{eq:Ham_TFIM}
\end{align}
The sum over $\mu,\nu$ counts every pair of qubits and the symmetric matrix $J_{\mu\nu}$ encodes the interaction strengths between qubits $\mu$ and $\nu$. To fix a time scale, we normalize the Hamiltonian to $\norm{H} \approx \sqrt{n}$ In total the Hamiltonian admits $\frac{n(n-1)}{2}$ ZZ terms and $n$ X terms. To use the notation introduced before, one needs to map the interactions between two qubits to a single index $j$
\begin{align}
	H_j = \begin{cases} 
		Z_{\mu(j)} Z_{\nu(j)} & \text{for } j \leq \frac{n(n-1)}{2} \\
		X_{\mu(j)} & \text{for } \frac{n(n-1)}{2} < j \leq \frac{n(n+1)}{2}
	\end{cases}
	\qquad 
	c_j = \begin{cases} 
		J_{\mu(j)\nu(j)} & \text{for } j \leq \frac{n(n-1)}{2} \\
		h_{\mu(j)} & \text{for } \frac{n(n-1)}{2} < j \leq \frac{n(n+1)}{2}
	\end{cases}
\end{align}
The exact ordering of the indices is not important, in general. To give an explicit example, consider row major ordering $j=(\mu-1)\left( n-\frac{\mu}{2} \right) + \nu$, for $\mu \leq \nu$. We plug in the $H_j$ into Eq.~\eqref{eq:cost} and observe that the only non-zero commutator terms are of the form $Z_\mu Y_\nu$. To also have a non-vanishing trace (Pauli-string squares to $\mathds{1}$), both commutators need to match, i.e.
\begin{align}
	\Tr\left( \comm{H_j}{H_{j'}} \comm{H_k}{H_{k'}} \right) = - 2^{n+2} \delta_{j,k} \delta_{j',k'} (\delta_{\mu(j), \mu(j')} + \delta_{\mu(j), \nu(j')}) \Theta_{j-\frac{n(n-1)}{2}} \Theta_{\frac{n(n-1)}{2}+1-j'},
\end{align}
where we used the implicit one-to-one mapping between indices $j \in \{1, ..., \frac{n(n+1)}{2}\}$ and $\mu, \nu \in \{1, ..., n\}$, as well as the Kronecker delta and the discretized Heavyside step function
\begin{align}
	\delta_{j,k} = \begin{cases} 1 & \text{if } j=k \\ 0 & \text{else} \end{cases} \qquad \Theta_{x} = \begin{cases} 1 & \text{if } x>0 \\ 0 & \text{else} \end{cases}.
\end{align}
The Heavyside factors ensure that only ZZ and X terms are composed in the commutator. The Kronecker deltas in the bracket account for the two cases where $X$ can hit the first or the second qubit of $ZZ$. The factors $\delta_{j,k} \delta_{j',k'}$ ensure that the Pauli-strings match yielding a non-vanishing trace. In total, we get
\begin{align}
	C(\theta)^2 &= 4 \sum_{j>j'} \sum_{k>k'} \chi_{j,j'} \chi_{k,k'} \delta_{j,k} \delta_{j',k'} (\delta_{\mu(j), \mu(j')} + \delta_{\mu(j), \nu(j')}) \Theta_{j-\frac{n(n-1)}{2}} \Theta_{\frac{n(n-1)}{2}+1-j'} \nonumber \\
	&= 4 \sum_{j=M_z+1}^{M_z+n} \sum_{j' = 1}^{M_z} \chi_{j,j'}^2 (\delta_{\mu(j), \mu(j')} + \delta_{\mu(j), \nu(j')}) \nonumber \\
	&= 4 \sum_{\mu < \nu}^{n} \left( \chi_{j(\mu),j(\mu, \nu)}^2 + \chi_{j(\nu),j(\mu, \nu)}^2  \right)
	\label{eq:cost_TFIM}
\end{align}
where we shorthanded the number of ZZ interactions as $M_z = \frac{n(n-1)}{2}$. Note that every addend in Eq.~\eqref{eq:cost_TFIM} is positive, so minimizing $C(\theta)$ reduces to the simultaneous minimization of each addend. However, the equations $\chi_{j,j'}^2 = 0$ do not decouple in general, so there is no exact formula for the roots of Eq.~\eqref{eq:cost_TFIM}.

\subsection{Two dimensional square lattice}
Physical models typically yield local interactions. Let us reduce the above discussion to a two-dimensional TFIM that only includes nearest neighbor interactions. This is straight-forwardly done by imposing
\begin{align}
	J_{\mu\nu} = 0 \qquad \text{and} \qquad \theta_{r,j(\mu, \nu)} = 0 \qquad \text{if } \mu \nsim \nu,
\end{align}
where $\mu \nsim \nu$ denotes that $\mu$ and $\nu$ are not neighbors. On a two-dimensional square lattice, it is handy to further refine the qubit indices into two $\mu=(\mu_x, \mu_y)$, where $\mu_x \in \{1, ..., n_x\}$ and $\mu_y \in \{1, ..., n_y\}$ count the lattice sites in each dimension, respectively. $\theta_{r,j} = 0$ also implies $\chi_{j,j'}=0 \, \forall j'$ which further reduces the number of terms in Eq.~\eqref{eq:cost_TFIM}.
\begin{align}
	C(\theta) &= 4 \sum_{\mu=(1,1)}^{(n_x,n_y)} \left( \chi_{j(\mu),j(\mu, (\mu_x, \mu_y+1))}^2 + \chi_{j((\mu_x, \mu_y+1)),j(\mu, (\mu_x, \mu_y+1))}^2 + \chi_{j(\mu),j(\mu, (\mu_x+1, \mu_y))}^2 + \chi_{j((\mu_x+1, \mu_y)),j(\mu, (\mu_x+1, \mu_y))}^2 \right)
	\label{eq:cost_TFIM2D}
\end{align}
Although the cost function becomes tedious to write down, it only contains a linear number of terms and is thus efficiently computable. The four contributions in Eq.~\eqref{eq:cost_TFIM} account for the different combinations of the appearing $ZY$ string and are depicted in Fig.~\ref{fig:TFIM_terms}.

\subsection{Numerical restults}
In Fig. \ref{fig:Ising3x3}, we study the performance of the derived cost function by optimizing a variational sequence with the second order cost function derived in Eq.~\eqref{eq:cost_TFIM2D} and comparing it to the exact value of the difference of time evolutions $\epsilon$ for different times $t$. The optimizer manages to find small values of $C(\theta)$ for all times. For small enough times $t < \frac{R}{\norm{H}} \approx 1$, the perturbative cost function also incorporates a faithful indicator for the exact norm in a sense that the optimal parameters also yield a significant decrease in the exact error of the variational sequence.

\begin{figure}[t]
	\centering
	\includegraphics[scale=0.7]{./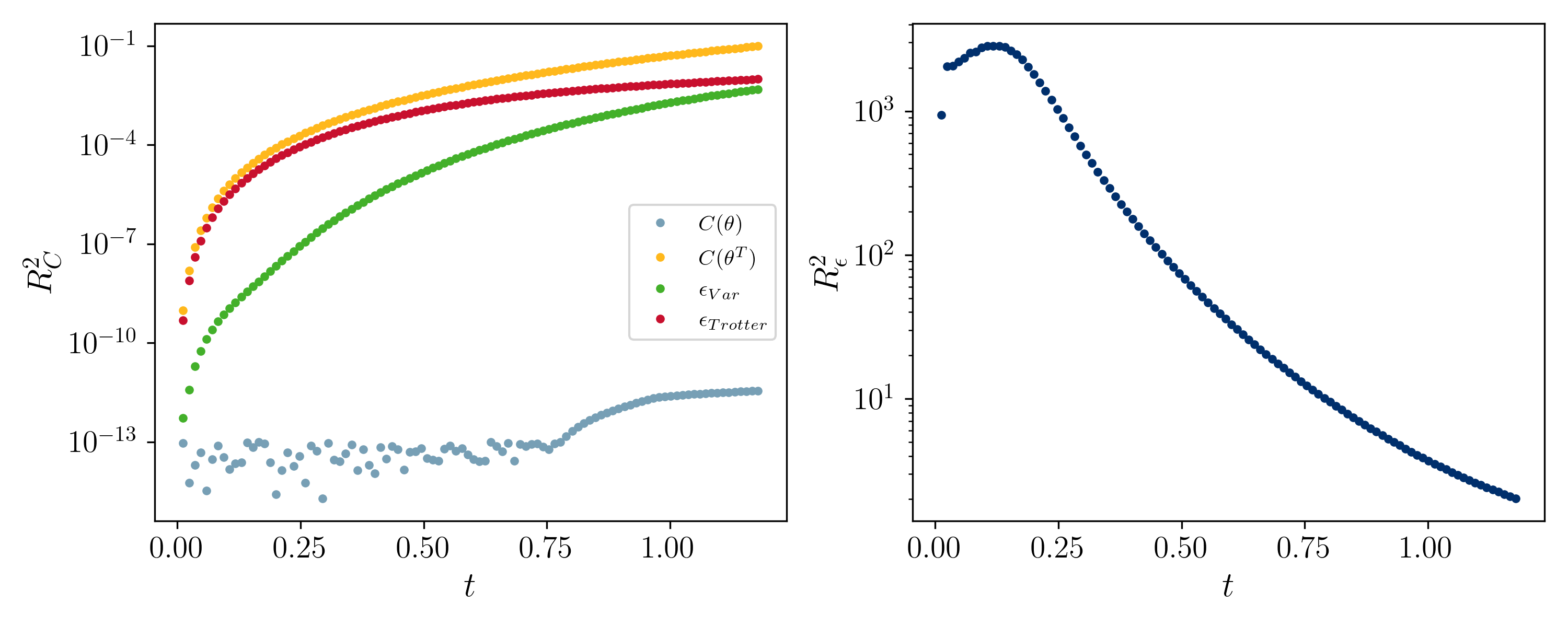}
	\caption{\textbf{Perturbative ($C$) and exact ($\epsilon$) error measures (left) and exact error ratio $R_\epsilon$ and the ratio of error estimates $R_E$ (right)} as defined in Eq.~\eqref{eq:ratios} evaluated on a nearest neighbor Ising model on a $3\times3$ quadratic lattice and random interaction strength $J_{\mu,\nu}$ centered around 1 and $h_\mu=0.25$. }
	\label{fig:Ising3x3}
\end{figure}

\section{Perturbative Distance of an XY model in a Transverse Field}
\label{app:XY}
In total, the XY Hamiltonian from Eq.~\eqref{eq:Ham_XY} admits $\frac{n(n-1)}{2}$ YY terms, $\frac{n(n-1)}{2}$ ZZ terms and $n$ X terms. To use the notation introduced before, one needs to map the interactions between two qubits to a single index $j$
\begin{align}
	H_j = \begin{cases} 
		X_{\mu(j)} & \text{for } j \leq n \\
		Y_{\mu(j)} Y_{\nu(j)} & \text{for } n < j \leq \frac{n(n+1)}{2} \\
		Z_{\mu(j)} Z_{\nu(j)} & \text{for } \frac{n(n+1)}{2} < j \leq n^2
	\end{cases}
	\qquad 
	c_j = \begin{cases} 
		h_{\mu(j)} & \text{for } j \leq n \\
		J^{(y)}_{\mu(j)\nu(j)} & \text{for } n < j \leq \frac{n(n+1)}{2} \\
		J^{(z)}_{\mu(j)\nu(j)} & \text{for } \frac{n(n+1)}{2} < j \leq n^2
	\end{cases}
	\label{eq:IsingXY_cases}
\end{align}
The exact ordering of the indices is not important, in general. To give an explicit example, consider row major ordering $j=(\mu-1)\left( n-\frac{\mu}{2} \right) + \nu$, for $\mu \leq \nu$. We plug in the $H_j$ into Eq.~\eqref{eq:cost} and observe that the only non-zero commutator terms are of the form $Z_\mu Y_\nu$ and $Y_\mu X_\nu Z_\sigma$. To also have a non-vanishing trace (Pauli-string squares to $\mathds{1}$), both commutators always need to match. While this is always given if $j=k$ and $j'=k'$, now we can also have the case $\comm{Y_\mu Y_\nu}{X_\mu} = - \comm{Z_\mu Z_\nu}{X_\nu} = -2 i Z_\mu Y_\nu$
\begin{align}
	C(\theta)^2 &= 4 \sum_{\mu < \nu}^{n} \left( \chi_{j_y(\mu, \nu),j(\mu)}^2 + \chi_{j_y(\mu, \nu),j(\nu)}^2 + \chi_{j_z(\mu, \nu),j(\mu)}^2 + \chi_{j_z(\mu, \nu),j(\nu)}^2 - \chi_{j_z(\mu, \nu),j(\mu)} \chi_{j_y(\mu, \nu),j(\nu)} - \chi_{j_z(\mu, \nu),j(\nu)} \chi_{j_y(\mu, \nu),j(\mu)} \right) \nonumber \\
	&+ 4 \sum_{\mu<\nu<\sigma}^n \left( \chi_{j_z(\mu, \nu),j_y(\nu, \sigma)}^2 + \chi_{j_z(\mu, \sigma),j_y(\nu, \sigma)}^2 + \chi_{j_z(\nu, \sigma),j_y(\mu, \nu)}^2 + \chi_{j_z(\nu, \sigma),j_y(\mu, \sigma)}^2 + \chi_{j_z(\mu, \nu),j_y(\mu, \sigma)}^2 + \chi_{j_z(\mu, \sigma),j_y(\mu, \nu)}^2 \right)
	\label{eq:cost_XY}
\end{align}
where we mapped back the lattice indices $\mu, \nu$ to the single index $j$ by reverting the mapping for all cases in Eq.~\eqref{eq:IsingXY_cases}. The terms in the first line of Eq.~\eqref{eq:cost_XY} come from commutator terms of the form YZ and the terms in the second line from terms of the form XYZ. 

\subsection{Two dimensional square lattice}
If we only admit nearest neighbor interaction and set all other $J^{(y,z)}_{\mu \nu} = 0$, we get
\begin{align}
	C(\theta)^2 &= 4 \sum_{\mu=(1,1)}^{(n_x,n_y)} \Big( \chi_{j_y(\mu, (\mu_x, \mu_y+1)),j(\mu)}^2 + \chi_{j_y(\mu, (\mu_x, \mu_y+1)),j((\mu_x, \mu_y+1))}^2 + \chi_{j_y(\mu, (\mu_x+1, \mu_y)),j(\mu)}^2 + \chi_{j_y(\mu, (\mu_x+1, \mu_y)),j((\mu_x+1, \mu_y))}^2 \nonumber \\
	&+ \chi_{j_z(\mu, (\mu_x, \mu_y+1)),j(\mu)}^2 + \chi_{j_z(\mu, (\mu_x, \mu_y+1)),j((\mu_x, \mu_y+1))}^2 + \chi_{j_z(\mu, (\mu_x+1, \mu_y)),j(\mu)}^2 + \chi_{j_z(\mu, (\mu_x+1, \mu_y)),j((\mu_x+1, \mu_y))}^2 \nonumber \\
	&- \chi_{j_z(\mu, (\mu_x, \mu_y+1)),j(\mu)} \chi_{j_y(\mu, (\mu_x, \mu_y+1)),j((\mu_x, \mu_y+1))} - \chi_{j_z(\mu, (\mu_x+1, \mu_y)),j(\mu)} \chi_{j_y(\mu, (\mu_x+1, \mu_y)),j((\mu_x+1, \mu_y))} \nonumber \\
	&- \chi_{j_z(\mu, (\mu_x, \mu_y+1)),j((\mu_x, \mu_y+1))} \chi_{j_y(\mu, (\mu_x, \mu_y+1)),j(\mu)} - \chi_{j_z(\mu, (\mu_x+1, \mu_y)),j((\mu_x+1, \mu_y))} \chi_{j_y(\mu, (\mu_x+1, \mu_y)),j(\mu)} \nonumber \\
	&+ \chi_{j_z(\mu, (\mu_x, \mu_y+1)),j_y((\mu_x, \mu_y+1), (\mu_x, \mu_y+2))}^2 + \chi_{j_z(\mu, (\mu_x+1, \mu_y)),j_y((\mu_x+1, \mu_y), (\mu_x+2, \mu_y))}^2 \nonumber \\
	&+ \chi_{j_z(\mu, (\mu_x, \mu_y+1)),j_y(\mu, (\mu_x+1, \mu_y))}^2 + \chi_{j_z(\mu, (\mu_x, \mu_y+1)),j_y((\mu_x, \mu_y+1), (\mu_x+1, \mu_y+1))}^2 \nonumber \\
	&+ \chi_{j_z(\mu, (\mu_x, \mu_y+1)),j_y((\mu_x-1, \mu_y), \mu)}^2 + \chi_{j_z(\mu, (\mu_x, \mu_y+1)),j_y((\mu_x-1, \mu_y+1), (\mu_x, \mu_y+1))}^2 \nonumber \\
	&+ \chi_{j_z((\mu_x, \mu_y+1), (\mu_x, \mu_y+2)),j_y(\mu, (\mu_x, \mu_y+1))}^2 + \chi_{j_z((\mu_x+1, \mu_y), (\mu_x+2, \mu_y)),j_y(\mu, (\mu_x+1, \mu_y))}^2 \nonumber \\
	&+ \chi_{j_z(\mu, (\mu_x+1, \mu_y)),j_y(\mu, (\mu_x, \mu_y+1))}^2 + \chi_{j_z((\mu_x, \mu_y+1), (\mu_x+1, \mu_y+1)),j_y(\mu, (\mu_x, \mu_y+1))}^2 \nonumber \\
	&+ \chi_{j_z((\mu_x-1, \mu_y), \mu),j_y(\mu, (\mu_x, \mu_y+1))}^2 + \chi_{j_z((\mu_x-1, \mu_y+1), (\mu_x, \mu_y+1)),j_y(\mu, (\mu_x, \mu_y+1))}^2 \Big) \label{eq:cost_XY2D}
\end{align}
Although being tedious to write down, Eq.~\eqref{eq:cost_XY2D} admits a linear number of addends that follow simple systematics. The first two lines gather the 2D-analogues of terms 1 through 4 in Eq.~\eqref{eq:cost_XY} and lines 3 and 4 gather the former terms 5 and 6. The remaining terms are gathered in lines 5 to 10 and resemble every combination of non-trvial three-body commutators.

\begin{figure*}[t]
	\centering
	\includegraphics[scale=0.7]{./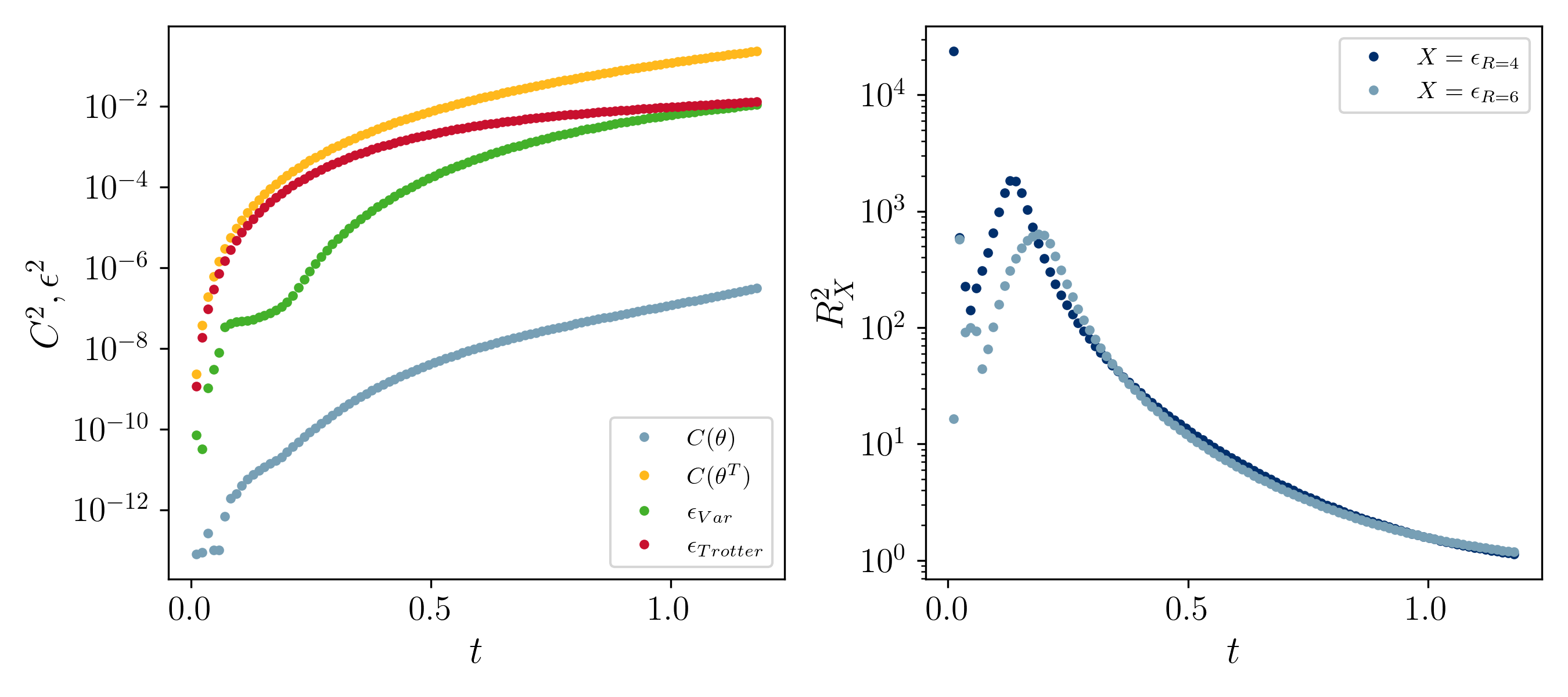}
	\caption{\textbf{Perturbative ($C$) and exact ($\epsilon$) error measures (left) and exact error ratio $R_\epsilon$ (right)} for $R=4$ and $R=6$ evaluated on a nearest neighbor XY model on a $3\times3$ quadratic lattice and random interaction strength $J^{(y)}_{\mu,\nu}$ centered around 0.5, $J^{(z)}_{\mu\nu}$ centered around 1 and $h_\mu=0.25$. On the left panel, results for $R=6$ are plotted.}
	\label{fig:IsingXY3x3_R46}
\end{figure*}
\subsection{More Parameters}
We explore the behaviour under changing the depth $R$ within a single time step, which also alters the number of parameters to be optimized. In Fig.~\ref{fig:IsingXY3x3_R46}, we show a numerical experiment analogue to Fig.~\ref{fig:IsingXY3x3}, but using $R=4$ and $R=6$. As with increasing $R$, $\frac{t}{R}$ decreases and the Trotter errors that we compare with get smaller. The optimizer still manages to find solutions that yield a comparable improvement ratio as seen in Fig.~\ref{fig:IsingXY3x3}. For very small times, the improvement ratios become much smaller, since the cost function $C$, as well as the error $\epsilon$, approach numerical accuracy. 

\end{document}